\newcommand{\ubar}[1]{\underaccent{\bar}{#1}}
\DeclareMathAlphabet{\mathbit}{OML}{cmr}{bx}{it}
\DeclareMathAlphabet{\mathss}{T1}{lmss}{m}{it}
\DeclareMathAlphabet{\mathssbold}{T1}{lmss}{bx}{sl}
\DeclareMathAlphabet{\mathssgreek}{LGR}{lmss}{m}{sl}
\DeclareMathAlphabet{\mathssgreekbold}{LGR}{lmss}{bx}{sl}
\DeclareSymbolFont{sssymbols}{T1}{lmss}{m}{it}
\DeclareMathAccent{\rtilde}{\mathalpha}{sssymbols}{"03}
\DeclareMathAccent{\rbar}{\mathalpha}{sssymbols}{"09}
\newcommand{\mbc}[1]{\mathssbold{#1}}
\newcommand{\mc}[1]{\mathss{#1}}
\newcommand{\mbcg}[1]{\mathssgreekbold{#1}}
\newcommand{\mb}[1]{\mathbit{#1}}
\newcommand{\mbbar}[1]{\bar{\mb{#1}}}
\newcommand{\mt}[1]{\mathrm{#1}}  
\DeclareMathOperator{\HermitianOp}{H}
\DeclareMathOperator{\TransposedOp}{T}
\newcommand{\bsc}[1]{\grave{\mb #1}}	
\newcommand{\crr}[1]{{\check{\mb #1}}}	
\newcommand{\zero}{\boldsymbol{0}} 
\newcommand{\id}{\mathbf{I}} 
\newcommand{\J}{\mathrm{j}}
\newcommand{\E}{\mathrm{e}}
\newcommand{\tr}[1]{\operatorname{trace}\lbrack#1\rbrack}	
\DeclareMathOperator{\st}{s.t.}
\DeclareMathOperator{\diag}{diag}	
\newcommand{\kron}{\otimes}
\newcommand{\Expect}[1]{\operatorname{E}\lbrack#1\rbrack}
\newcommand{\He}{{\HermitianOp}}
\newcommand{\Tr}{{\TransposedOp}}
\DeclareMathOperator*{\argmax}{argmax}
\DeclareMathOperator*{\find}{find}
\newcommand{\proj}[1]{\operatorname{proj}_{#1}}
\newcommand{\covs}[1]{c_{#1}}
\newcommand{\pcovs}[1]{\mc{\rtilde c}_{#1}}
\newcommand{\cov}[1]{\mbc C_{#1}}
\newcommand{\covR}[1]{\mb C_{#1}}
\newcommand{\pcov}[1]{\mbc{\rtilde C}_{#1}}
\newcommand{\allk}{{\forall k}}
\newcommand{\Ms}{L}
\newcommand{\ms}{\ell}
\newcommand{\Lag}{\Theta}
\newcommand{\brbfun}{F}
\newcommand{\brbinitfun}{\hat f}
\newcommand{\brbeps}{\epsilon}
\newcommand{\brbBset}{\mathbb{B}}
\newcommand{\brbB}{\mathcal{B}}
\newcommand{\brbBox}[2]{\left[{#1};~{#2}\right]}
\newcommand{\brbBoxSmall}[2]{[{#1};~{#2}]}
\newcommand{\brbUsymb}{U}
\newcommand{\brbLsymb}{A}
\newcommand{\brbU}[1][\brbBox{\brba}{\brbb}]{\brbUsymb(#1)}
\newcommand{\brbL}[1][\brbBox{\brba}{\brbb}]{\brbLsymb(#1)}
\newcommand{\brbxsymb}{x}
\newcommand{\brbysymb}{y}
\newcommand{\brbasymb}{a}
\newcommand{\brbbsymb}{b}
\newcommand{\brbpsymb}{p}
\newcommand{\brbx}{\mb\brbxsymb}
\newcommand{\brbxk}[1][k]{\brbxsymb_{#1}}
\newcommand{\brby}{\mb\brbysymb}
\newcommand{\brbyk}[1][k]{\brbysymb_{#1}}
\newcommand{\brba}{\mb\brbasymb}
\newcommand{\hatbrbak}[1][k]{{\hat\brbasymb}_{#1}}
\newcommand{\brbb}{\mb\brbbsymb}
\newcommand{\hatbrbbk}[1][k]{{\hat\brbbsymb}_{#1}}
\newcommand{\brbp}{\mb\brbpsymb}
\newcommand{\brbpk}[1][k]{\brbpsymb_{#1}}
\newcommand{\gradeps}{\epsilon}
\newcommand{\graditer}{m}
\newcommand{\gradss}{s}
\newcommand{\bnoise}{\mbcg h}
\newcommand{\pow}{p}
\newcommand{\notk}{j}
\theoremstyle{remark}
\newtheorem{theorem}{Theorem}
\newtheorem{lemma}{Lemma}
\newtheorem{remark}{Remark}
\begin{document}

\title{Two-User SIMO Interference Channel with TIN: Improper Signaling versus Time-Sharing}

\author{Christoph~Hellings,~\IEEEmembership{Member,~IEEE}, 
        Ferhad Askerbeyli,
        and~Wolfgang~Utschick,~\IEEEmembership{Senior~Member,~IEEE}
\thanks{The authors conducted this research at Technische Universit\"at M\"unchen, Professur f\"ur Methoden der Signalverarbeitung, 80290 M\"unchen, Germany, 
Telephone: +49 89 289-28520, e-mail: hellings@tum.de, ferhad.askerbeyli@tum.de, utschick@tum.de.
C. Hellings is now with Department of Physics, ETH Zurich, 8093 Zurich, Switzerland. F. Askerbeyli is also with Huawei Technologies GmbH, 80992 M\"unchen, Germany.}}

\maketitle

\begin{abstract}
In the two-user Gaussian interference channel with Gaussian inputs and treating interference as noise (TIN),
improper complex signals can be beneficial if time-sharing is not allowed or if only the data rates are averaged over several transmit strategies (convex hull formulation).
On the other hand, proper (circularly symmetric) signals have recently been shown to be optimal if coded time-sharing is considered,
i.e., if both the data rates and the transmit powers are averaged.
In this paper, we show that both conclusions remain the same if single-input multiple-output (SIMO)
systems with multiple antennas at the receivers are considered.
The proof for the case with coded time-sharing is via a novel enhanced channel concept for the two-user SIMO interference channel,
which turns out to deliver a tight outer bound to the TIN rate region with coded time-sharing.
The result for the case without coded time-sharing is demonstrated by studying specific examples
in which a newly proposed composite real gradient-projection method for improper signaling can outperform the globally optimal proper signaling strategy.
In addition, we discuss how the achievable TIN rate region with coded time-sharing can be computed numerically.
\end{abstract}

\begin{IEEEkeywords}
Improper signaling, interference channel, rate region, single-input/multiple-output (SIMO), time-sharing, treat interference as noise.
\end{IEEEkeywords}

\section{Introduction}
\label{sec:intro}
Using so-called improper signals, where the pseudovariance $\mc{\rtilde c}_{\mc x}=\Expect{(\mc x-\Expect{\mc x})^2}$ is not equal to zero,
instead of proper signals \cite{NeMa93} with zero pseudovariance\footnote{In the case of zero-mean Gaussian random variables, propriety is equivalent to circular symmetry of the probability density function.} 
has been identified as a candidate to manage interference in future communication systems.
In particular, it was shown in \cite{CaJaWa10}
that improper signals can achieve more degrees of freedom than proper signals in the three-user Gaussian interference channel.
This inspired further research on improper signals in other communication scenarios with interference.

For the two-user Gaussian interference channel
under the assumption that we use Gaussian codebooks and treat interference as noise (TIN),
it has recently been shown that we have to distinguish between two cases.
It was shown in \cite{HeUt19b} that proper signaling achieves the whole rate region if coded time-sharing (see \cite{HaKo81,GaKi11} and Section~\ref{sec:sysmod:ts}) is allowed,
i.e., if it is allowed to average the data rates and the transmit powers over several transmit strategies.
If we instead restrict ourselves to the so-called convex hull formulation (see \cite{HaKo81} and Section~\ref{sec:sysmod:cvx}), i.e., we only allow an averaging of the rates,
or to pure strategies (see Section~\ref{sec:sysmod:pure}), i.e., we do not allow any averaging,
improper signaling can bring gains over proper signaling as demonstrated in \cite{ZeYeGuGuZh13}.

In this paper, we show that the situation remains the same in the two-user Gaussian single-input multiple-output (SIMO) interference channel
with Gaussian codebooks and TIN.
For the case with coded time-sharing, we provide a nontrivial extension of the proof from \cite{HeUt19a},
showing that proper signals are the optimal inputs in this case (Section~\ref{sec:main}).
Afterwards, we discuss numerical algorithms for pure strategies and for coded time-sharing under the assumption of proper signals,
and we propose a gradient-projection method for weighted sum rate maximization with improper signals (Section~\ref{sec:algo}).
Using the algorithmic solutions, we can visualize comparisons of the rate regions obtained with the various types of strategies
(proper/improper and pure/convex hull/time-sharing),
and we can establish the result that improper signals can be beneficial if coded-time sharing is not allowed (Section~\ref{sec:num}).
Finally, interpretations of the results and possible extensions to other scenarios are discussed (Section~\ref{sec:conclusion}).

There is a large variety of results on the comparison of proper and improper signals in single-input single-output (SISO) scenarios,
and some results are available for the multiple-input single-output (MISO) interference channel,
but the topic has not yet been studied in detail for the SIMO interference channel.
Improper signals were considered in the SISO interference channel from a game-theoretic perspective in \cite{HoJo12},
and heuristic optimization methods for improper signaling were proposed in \cite{PaPaKiLe13,ZeYeGuGuZh13}.
Demonstrations of the superiority of improper signaling in the case without coded time-sharing were given in \cite{ZeYeGuGuZh13,KiYeCh13}
for the case of mutual interference and in \cite{KuSu15,LaSaSc17} for the case with one-sided interference (so-called Z-interference channel).
The optimality of proper signals in the case with coded time-sharing was first shown for the Z-interference channel in \cite{HeUt17a}
and then extended to the two-user interference channel with mutual interference in \cite{HeUt19b}.
For the two-user MISO interference channel, \cite{ZeZhGuGu13} proposed a heuristic with improper signals that
was shown to outperform the globally optimal proper signaling in the case without coded time-sharing.
A heuristic improper signaling scheme for the multiple-input multiple-output (MIMO) interference channel can be found in \cite{LaAgVi16}.

All these previous results do not answer the two questions that are settled in this paper,
namely whether proper signals remain optimal in the case with coded time-sharing when switching from a SISO scenario to a SIMO scenario,
and whether gains by improper signals can be shown in the SIMO scenario without coded time-sharing.

As an additional contribution, we show that symbol extensions (considering multiple subsequent channel uses as a single channel use in a higher-dimensional system, e.g., \cite{JaSh08,CaJa08,CaJaWa10})
are not required to achieve the whole time-sharing rate region.
This is a generalization of results for the real-valued SISO interference channel in \cite{BeLiNaYa16} and the complex SISO interference channel in \cite{HeUt19b}.

Finally, the proposed gradient-projection algorithm might be of interest in its own
since it can also be applied to the more general MIMO interference channel.
A comparison of this approach to other heuristic approaches should be investigated in the future,
but goes beyond the scope of this paper,
which focuses on investigating the above-mentioned fundamental aspects of the two-user SIMO interference channel.

\emph{Notation:} 
We use $\zero$ for the zero vector or zero matrix, $\bullet^\Tr$ for the transpose, and $\bullet^\He$ for the conjugate-transpose.
Inequalities for vectors have to be understood as sets of component-wise inequalities
while $\succeq$ for matrices is meant in the sense of positive-semidefiniteness.
The vector $\mb e_i$ is the $i$th canonical unit vector, 
the matrix $\id_N$ is the $N\times N$ identity matrix,
and we use $\diag(\bullet_i)$ to construct a diagonal matrix from its diagonal elements $\bullet_i$.
We write $\kron$ for the Kronecker product, and $\|\bullet\|$ is the $2$-norm of a vector.
The ceiling operation $\lceil a\rceil$ rounds a real number $a$ to the smallest integer greater than or equal to $a$.
We use $\Re$, $\Im$, and $\angle$ for the real part, imaginary part, and the argument of a complex number, respectively.
To distinguish real-valued and complex quantities, we write complex quantities in sans-serif font and real-valued quantities in serif font.
Throughout the paper, we use the shorthand notation $\notk:=3-k$ for the index of the interfering user when considering a user $k\in\{1,2\}$.

\section{System Model and Time-Sharing}
\label{sec:sysmod}
We consider a two-user SIMO interference channel 
\begin{subequations}
\label{eq:model}
\begin{align}
\label{eq:model1}
\mbc y_1 &= \mbc h_{11} \mc x_1 + \mbc h_{12} \mc x_2 + \bnoise_1 \in\mathbb{C}^{N_1} \\
\label{eq:model2}
\mbc y_2 &= \mbc h_{21} \mc x_1 + \mbc h_{22} \mc x_2 + \bnoise_2 \in\mathbb{C}^{N_2}
\end{align}
\end{subequations}
with proper Gaussian noise $\bnoise_k\sim\mathcal{CN}(\zero,\cov{\bnoise_k})$, where
the input signals $\mc x_k,~k=1,2$ are (possibly improper) zero-mean complex Gaussian with variance $\covs{\mc x_k}=\Expect{|\mc x_k|^2}$ and pseudovariance $\pcovs{\mc x_k}=\Expect{\mc x_k^2}$.
The whole paper focuses on the case where the receivers treat interference as noise (TIN).
It is thus convenient for some derivations to define the interference-plus-noise signals
\begin{align}
\mbc s_1 &= \mbc h_{12} \mc x_2 + \bnoise_1, &
\mbc s_2 &= \mbc h_{21} \mc x_1 + \bnoise_2.
\end{align}

The channel vectors $\mbc h_{kk}$ and $\mbc h_{k\notk}$ are assumed to stay constant over time, and we assume perfect channel state information.
Furthermore, it is assumed that the input signals of both users and the noise at both users (i.e., $\mc x_1$, $\mc x_2$, $\bnoise_1$, and $\bnoise_2$) are all mutually independent.
Without loss of generality, we assume $\cov{\bnoise_k}=\id_{N_k},~\allk$.

The achievable rates (Shannon rates) with TIN can be written as (e.g., \cite{ZeYeGuGuZh13})
\begin{multline}
\label{eq:rk}
r_k(\mathcal X) =\\ \log_2\frac{\det \cov{\mbc y_k} }{ \det \cov{\mbc s_k}  }+\frac{1}{2}\log_2
\frac{\det\left(\id_{N_k} - \cov{\mbc y_k}^{-1} \pcov{\mbc y_k} \cov{\mbc y_k}^{-\Tr} \pcov{\mbc y_k}^\He\right)}
{\det\left(\id_{N_k} - \cov{\mbc s_k}^{-1} \pcov{\mbc s_k} \cov{\mbc s_k}^{-\Tr} \pcov{\mbc s_k}^\He\right)}
\end{multline}
with 
\begin{subequations}
\begin{align}
\cov{\mbc y_k} &= \mbc h_{kk} \covs{\mbc x_k} \mbc h_{kk}^\He + \cov{\mbc s_k}, &
\cov{\mbc s_k} &= \mbc h_{k\notk} \covs{\mbc x_\notk} \mbc h_{k\notk}^\He + \id_{N_k}, \\
\pcov{\mbc y_k} &= \mbc h_{kk} \pcovs{\mbc x_k} \mbc h_{kk}^\Tr + \pcov{\mbc s_k}, &
\pcov{\mbc s_k} &= \mbc h_{k\notk} \pcovs{\mbc x_\notk} \mbc h_{k\notk}^\Tr.
\end{align}
\end{subequations}
The tuple $\mathcal X=(c_{\mc x_1},c_{\mc x_2},\mc{\rtilde c}_{\mc x_1},\mc{\rtilde c}_{\mc x_2})$ 
summarizes all parameters that describe the chosen strategy, i.e., all transmit variances and pseudovariances.
The special case $\mc{\rtilde c}_{\mc x_1}=\mc{\rtilde c}_{\mc x_2}=0$ corresponds to proper signaling.
In this case, the second summand in \eqref{eq:rk} vanishes.

As an alternative to \eqref{eq:rk}, we can calculate the achievable data rates via the composite real representation,
where complex vectors $\mbc b$ and linear operations $\mbc b \mapsto \mbc A \mbc b$ (with a complex matrix $\mbc A$) are represented by (e.g., \cite{AdScSc11})
\begin{align}
\crr b&= \begin{bmatrix}
\Re\mbc b\\\Im\mbc b
\end{bmatrix}
&&\text{and}&
\label{eq:crrmat}
\crr b &\mapsto \bsc A \crr b , ~~\bsc A = \begin{bmatrix}
\Re\mbc A & -\Im\mbc A\\\Im\mbc A& \Re\mbc A
\end{bmatrix}.
\end{align}
The second-order properties of a random vector $\mbc b$ with covariance matrix $\mbc C_{\mbc b}$ and pseudocovariance matrix $\mbc{\rtilde C}_{\mbc b}$,
can equivalently be described by the covariance matrix of $\crr b$.
The relation reads as (e.g., \cite{HeUt15})
\begin{align}
\mb C_{\crr b} &= \frac{1}{2}\left(\begin{bmatrix}
\Re \mbc{C}_{\mbc b} & - \Im \mbc{C}_{\mbc b} \\
\Im \mbc{C}_{\mbc b} & \Re \mbc{C}_{\mbc b}
\end{bmatrix}+\begin{bmatrix}
\Re \mbc{\rtilde C}_{\mbc b} & \Im \mbc{\rtilde C}_{\mbc b} \\
\Im \mbc{\rtilde C}_{\mbc b} & - \Re \mbc{\rtilde C}_{\mbc b}
\end{bmatrix}\right)
\label{eq:crrcov}
\end{align}
and $\mb C_{\crr b}$ is referred to as the composite real covariance matrix.

Using these definitions, the achievable rates are given by
\begin{equation}
\label{eq:rkR}
r_k(\covR{\crr x_k},\covR{\crr x_\notk}) =\\ \frac{1}{2}\log_2\frac{\det \covR{\crr y_k} }{ \det \covR{\crr s_k}  }
\end{equation}
with
\begin{subequations}
\begin{align}
\covR{\crr y_k} &= \bsc H_{kk} \covR{\crr x_k} \bsc H_{kk}^\Tr + \covR{\crr s_k}, \\
\covR{\crr s_k} &= \bsc H_{k\notk} \covR{\crr x_\notk} \bsc H_{k\notk}^\Tr + \frac{1}{2}\id_{2N_k}.
\label{eq:covRsk}
\end{align}
\end{subequations}
The composite real channel matrices according to the definition in \eqref{eq:crrmat} read as
\begin{align}
\bsc H_{kk} &= \begin{bmatrix}
\Re\mbc h_{kk} & -\Im\mbc h_{kk}\\\Im\mbc h_{kk}& \Re\mbc h_{kk}
\end{bmatrix}, &
\bsc H_{k\notk} &= \begin{bmatrix}
\Re\mbc h_{k\notk} & -\Im\mbc h_{k\notk}\\\Im\mbc h_{k\notk}& \Re\mbc h_{k\notk}
\end{bmatrix}
\end{align}
The factor of $\frac{1}{2}$ in front of the logarithm accounts for the fact that real-valued data streams instead of complex ones are considered
while the factor of $\frac{1}{2}$ in \eqref{eq:covRsk} comes from applying \eqref{eq:crrcov} to the noise covariance matrix.

\subsection{Pure Strategies}
\label{sec:sysmod:pure}
In this paper, three types of transmit strategies are considered.
In the first type, which we refer to as \emph{pure} strategies, a single choice for the statistical properties of the input signals is applied as long as the channel realization remains unchanged.

To study the rate region with pure strategies we compute Pareto-optimal pairs of achievable rates $(r_1,r_2)$ by
solving the so-called \emph{rate balancing} \cite{JoBo02} optimization\footnote{Note that \eqref{eq:noTS:pos} is the necessary  and sufficient condition for a valid pseudovariance (see, e.g., \cite{ScSc10}).}
\begin{subequations}
\label{eq:noTS}
\begin{align}
\max_{\mathcal X,R\in\mathbb{R}} ~~R  \quad\st\quad & r_k(\mathcal X) \geq \rho_k R,~\allk \label{eq:noTS:rate} \\
& 0\leq c_{\mc x_k}\leq P_k, ~\allk \label{eq:noTS:pow} \\
&  |\mc{\rtilde c}_{\mc x_k}| \leq c_{\mc x_k},~\allk \label{eq:noTS:pos}
\end{align}
\end{subequations}
with the so-called \emph{rate profile vector} \cite{MoZhCi06} $\mb\rho$ set to $\mb\rho=[\rho_1,\rho_2]^\Tr=[\beta,1-\beta]^\Tr$ for various $\beta\in[0;1]$.
The entries of $\rho_k$ define relative rate targets of the two users,
and the optimal value of $R$ is the highest possible common scaling factor that still leads to a feasible pair of rates.
Without loss of generality, we assume that $\rho_1+\rho_2=1$, so that the value of $R$ equals the achieved sum rate.

\subsection{Time-Sharing}
\label{sec:sysmod:ts}
The second type of strategies we consider combine multiple transmit strategies by means of weighting factors $\mb{\tau}=[\tau_1, \dots, \tau_\Ms]$ that indicate
which fraction of the total time the $\ms$th strategy should be used.
Instead of interpreting them as the length of time intervals,
these weights can also be seen as the probability that the $\ms$th strategy is chosen
(see the concept of a time-sharing parameter in \cite{HaKo81}).

The so-called \emph{time-sharing} strategies (or \emph{coded time-sharing}, e.g., \cite{GaKi11})
can be optimized by solving
\begin{subequations}
\label{eq:primal}
\begin{align}
\label{eq:primal:rate}
\max_{\substack{\mathcal X^{(\ms)},\Ms\in\mathbb{N},R\in\mathbb{R}\\\tau\geq\zero:\, \sum_{\ms=1}^{\Ms}\tau_\ms=1}} ~R  \quad\st\quad 
& \sum_{\ms=1}^\Ms \tau_\ms r_k(\mathcal X^{(\ms)}) \geq \rho_k R,~\allk \\
& \sum_{\ms=1}^\Ms \tau_\ms c_{\mc x_k}^{(\ms)}\leq P_k, ~\allk \label{eq:primal:pow}\\
& 0\leq c_{\mc x_k}^{(\ms)}, ~\allk,~\forall\ms \label{eq:primal:pos}\\
&  |\mc{\rtilde c}_{\mc x_k}^{(\ms)}| \leq c_{\mc x_k}^{(\ms)},~\allk, ~\forall \ms 
\label{eq:primal:pcov}
\end{align}
\end{subequations}
where we use $\mathcal X^{(\ms)}=(c_{\mc x_1}^{(\ms)},c_{\mc x_2}^{(\ms)},\mc{\rtilde c}_{\mc x_1}^{(\ms)},\mc{\rtilde c}_{\mc x_2}^{(\ms)})$
to denote the transmit (pseudo)variances of the $\ms$th strategy.

\subsection{Convex Hull}
\label{sec:sysmod:cvx}
The third type of strategies, which do not exploit the full potential of time-sharing \cite{HaKo81},
are obtained by first finding the rate region with pure strategies and then taking its convex hull (e.g., \cite{HoJo12,ZeYeGuGuZh13,ZeZhGuGu13,LaSaSc17}).
This convex hull operation corresponds to averaging the data rates as in \eqref{eq:primal:rate}, but the power constraints for pure strategies \eqref{eq:noTS:pow}
are respected when computing the original rate region.
Thus, the convex hull formulation cannot exploit the potential of average power constraints as in \eqref{eq:primal:pow}.
This can be formulated as
\begin{subequations}
\label{eq:rts}
\begin{align}
\label{eq:rts:rate}
\max_{\substack{\mathcal X^{(\ms)},\Ms\in\mathbb{N},R\in\mathbb{R}\\\tau\geq\zero:\, \sum_{\ms=1}^{\Ms}\tau_\ms=1}} ~R  \quad\st\quad 
& \sum_{\ms=1}^\Ms \tau_\ms r_k(\mathcal X^{(\ms)}) \geq \rho_k R,~\allk \\
& 0\leq c_{\mc x_k}^{(\ms)}\leq P_k, ~\allk,~\forall\ms \label{eq:rts:pow}\\
&  |\mc{\rtilde c}_{\mc x_k}^{(\ms)}| \leq c_{\mc x_k}^{(\ms)},~\allk, ~\forall \ms.
\label{eq:rts_pcov}
\end{align}
\end{subequations}

Due to the more restrictive power constraints, the convex hull formulation leads in general to a smaller rate region than coded time-sharing.
To obtain the complete time-sharing rate region, we thus cannot take the convex hull after solving \eqref{eq:noTS},
but we instead have to account for the possibility of time-sharing already during the optimization by solving \eqref{eq:primal}.
Some might argue that coded time-sharing can lead to stronger fluctuations of the transmit powers than the convex hull formulation.
A detailed discussion why this should not be seen as an obstacle and how long-term fluctuations could be circumvented is given in \cite{HeUt19b}.

\subsection{Remark on Symbol Extensions}
\label{sec:sysmod:ext}
All rate expressions given above can be extended to include the possibility of symbol extensions (see, e.g., \cite{JaSh08,CaJa08,CaJaWa10,BeLiNaYa16,NaNg19}).
In this case, transmit symbols
\begin{equation}
\label{eq:symext}
\ubar {\mbc x}_k
=\begin{bmatrix}
\mc x_{k,1}\\\vdots\\\mc x_{k,T}
\end{bmatrix}
\end{equation}
spanning over $T$ channel uses are considered,
and the achievable rates can be calculated via
\begin{equation}
\label{eq:rkRT}
r_k(\mathcal X) =\\ \frac{1}{2T}\log_2\frac{\det \covR{\ubar{\crr y}_k} }{ \det \covR{\ubar{\crr s}_k}  }
\end{equation}
with
\begin{subequations}
\begin{align}
\covR{\ubar{\crr y}_k} &= \ubar{\bsc H}_{kk} \covR{\ubar{\crr x}_k} \ubar{\bsc H}_{kk}^\Tr + \covR{\ubar{\crr s}_k}, \\
\covR{\ubar{\crr s}_k} &= \ubar{\bsc H}_{k\notk} \covR{\ubar{\crr x}_\notk} \ubar{\bsc H}_{k\notk}^\Tr + \frac{1}{2}\id_{2TN_k}.
\label{eq:covRskExt}
\end{align}
\end{subequations}
where $\ubar{\bsc H}_{kk}$ and $\ubar{\bsc H}_{k\notk}$ are the composite real representations \eqref{eq:crrmat} of\footnote{Recall that we only consider constant channel coefficients in this paper.}
\begin{align}
\ubar{\mbc H}_{kk}&=\id_T \kron \mbc h_{kk},&
\ubar{\mbc H}_{k\notk}&=\id_T \kron \mbc h_{k\notk}.
\end{align}

Note that switching to a symbol extension over $T$ channel uses implies that the transmit power constraint becomes an average power constraint
\begin{equation}
\frac{1}{T}\tr{\covR{\ubar{\crr x}_k}}\leq P_k
\end{equation}
over the $T$ elements of $\ubar {\mbc x}_k$.
As this does not seem to be compatible with the assumptions in the case of pure strategies or of the convex hull formulation,
we only consider the possibility of symbol extensions when studying coded time-sharing.

However, in the formal proofs in the appendix it is shown that symbol extensions do not bring any advantages if coded time-sharing is considered.
Therefore, for the sake of readability, the derivations in the following section are directly written down for the case without symbol extensions.

The fact that symbol extensions do not bring an advantage in the case of Gaussian signals and TIN with coded time-sharing
was established for the real-valued single-antenna interference channel in \cite{BeLiNaYa16}
and extended to the complex single-antenna interference channel in \cite{HeUt19b}.
Similar considerations for single-antenna scenarios can also be found in \cite{ChVe93,NaNg19}.
One part of proving Theorem~\ref{th:main} in the next section is to extend this result to the complex SIMO interference channel.

\section{Main Result with Coded Time-Sharing}
\label{sec:main}
In this section, we establish the main result of this paper, which states the optimality of proper signaling without symbol extensions
in the considered scenario with coded time-sharing.

\begin{theorem}
\label{th:main}
Consider the two-user Gaussian SIMO interference channel \eqref{eq:model}
with Gaussian input signals under power constraints \eqref{eq:primal:pow}, and assume that interference is treated as noise.
Then, the whole time-sharing rate region $\mathcal{R}$ can be achieved using proper input signals without symbol extensions.
\end{theorem}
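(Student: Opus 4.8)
The plan is to establish an outer bound to the time-sharing rate region $\mathcal{R}$ that is achievable by proper signaling, which then forces equality. Since coded time-sharing averages both rates and powers, the region $\mathcal{R}$ is the closure of the set of rate pairs $\big(\sum_\ms \tau_\ms r_1(\mathcal X^{(\ms)}),\, \sum_\ms \tau_\ms r_2(\mathcal X^{(\ms)})\big)$ subject to the averaged power constraints \eqref{eq:primal:pow}. I would first reduce the problem to bounding, for each strategy $\ms$ individually, the pair $\big(r_1(\mathcal X^{(\ms)}), r_2(\mathcal X^{(\ms)})\big)$ in terms of the per-strategy powers $c_{\mc x_1}^{(\ms)}, c_{\mc x_2}^{(\ms)}$: if one can show that for every (possibly improper) choice $\mathcal X$ with variances $(Q_1,Q_2)$ there is a proper choice $\mathcal X'$ with the \emph{same} variances $(Q_1,Q_2)$ whose rate pair dominates some relevant relaxation, the time-sharing structure and the power averaging carry over verbatim. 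The subtlety, which is exactly why the SISO proof of \cite{HeUt19b} does not apply directly, is that a single proper strategy need not dominate a single improper strategy pointwise; what one needs is the weaker statement at the level of the full time-shared region.

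The core technical step is the \emph{enhanced channel} construction advertised in the abstract. Working in the composite real representation \eqref{eq:rkR}, each user $k$ sees a $2N_k$-dimensional real channel with transmit covariance $\covR{\crr x_k} \in \mathbb{R}^{2N_k\times 2N_k}$ constrained by $\tr{\covR{\crr x_k}} \le 2Q_k$ (from \eqref{eq:crrcov}) and by the structural condition \eqref{eq:crrcov}–\eqref{eq:noTS:pos} that characterizes which real covariances correspond to valid complex second-order statistics; properness corresponds to $\covR{\crr x_k}$ having the block structure in \eqref{eq:crrmat}. I would construct, for a given strategy, an enhanced interference channel — enlarging the admissible input covariances and/or weakening the cross-links — for which (i) proper inputs are provably optimal, by a standard rotation/averaging argument exploiting that the enhanced per-user constraint set is invariant under the unitary-like transformations $\crr x \mapsto \bsc U \crr x$ with $\bsc U$ of the form \eqref{eq:crrmat} for unitary $\mbc U$, so that averaging an improper covariance over such rotations yields a proper one without decreasing either rate while preserving the trace; and (ii) the enhanced channel's TIN rate region, with coded time-sharing, contains the original one. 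Step (ii) is where the SIMO geometry enters: I expect to choose the enhancement so that the interference subspace seen at each receiver is handled favorably, probably by aligning the enhanced cross-channel with the direction that makes the proper-optimality argument go through while the desired-signal term is untouched or improved.

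Alongside this, I would handle the symbol-extension claim: for a length-$T$ extension the channels are $\id_T \kron \mbc h$, so the composite real channel is block-structured, and I would argue that the enhanced-channel bound applied to the $T$-fold system still only needs proper, non-symbol-extended inputs — essentially because the bound decouples across the $T$ channel uses after the enhancement, so splitting a symbol-extended strategy into $T$ time-shared single-letter strategies with appropriately averaged powers cannot lose anything. Finally I would verify achievability: the outer bound I derive is attained by an explicit proper, single-letter, time-sharing scheme (indeed the per-strategy proper optima combined with the optimal $\tau$), closing the loop. The main obstacle I anticipate is pinning down the right enhanced channel in the SIMO case — it must simultaneously be loose enough that proper signaling is optimal for it (which needs the per-user constraint set and the rate functional to be invariant under the relevant rotations) and tight enough that its time-sharing region coincides with, rather than strictly contains, $\mathcal{R}$; getting both properties at once, rather than just an outer bound that is not tight, is the crux.
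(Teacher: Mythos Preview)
Your high-level outline matches the paper's architecture (transform, enhance, show proper optimality in the enhanced system, close the loop), but two key mechanisms are wrong or missing, and without them the argument does not go through.

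First, your proposed proof that proper inputs are optimal in the enhanced channel---averaging an improper covariance over rotations $\bsc U$ of the form \eqref{eq:crrmat} to obtain a proper one ``without decreasing either rate''---does not work in an interference channel. The TIN rate $r_k$ is concave in $\covR{\crr x_k}$ for fixed interference, but it is \emph{not} jointly concave in $(\covR{\crr x_1},\covR{\crr x_2})$; rotating $\crr x_k$ also changes the interference seen by user $\notk$, and $r_{\notk}$ is not concave in the interferer's covariance. So Jensen-type averaging cannot certify that both rates stay put. The paper's mechanism is entirely different: the enhancement is a \emph{phase-only} change (not ``enlarging the admissible input covariances and/or weakening the cross-links'') that makes all channel vectors real. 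In the composite real representation the channel matrices then become block-diagonal with two identical blocks, so an improper strategy is nothing but a power imbalance between those two identical subchannels, which can be replaced by coded time-sharing between two \emph{proper} strategies using those two power allocations. This is the step where the coded-time-sharing assumption is actually used; concavity plays no role.

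Second, you frame the difficulty as making the enhanced region ``coincide with, rather than strictly contain, $\mathcal{R}$''. That is not what is needed, and aiming for it would likely fail. The paper allows $\bar{\mathcal R}\supsetneq\mathcal R$; the loop closes because the enhancement (being a phase change) leaves all \emph{proper}-signaling rates unchanged, i.e., $\bar{\mathcal R}_{\mathrm{proper}}=\mathcal R_{\mathrm{proper}}$. The chain is $\mathcal R\subseteq\bar{\mathcal R}=\bar{\mathcal R}_{\mathrm{proper}}=\mathcal R_{\mathrm{proper}}\subseteq\mathcal R$. Designing the enhancement so that it is invisible to proper signals, rather than tight overall, is the idea you are missing.
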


The proof is established by combining four Lemmas that are stated and proven below.
First, Lemma~\ref{lem:trans} describes a transformation to a simpler SIMO interference channel
whose rate region $\mathcal{R}'$ equals the original rate region $\mathcal{R}$.
Then, Lemma~\ref{lem:enhanced} introduces an enhanced SIMO interference channel whose rate region $\bar{\mathcal{R}}$ contains $\mathcal{R}'$.
However, under a restriction to proper signaling without symbol extensions,
the rate regions $\mathcal{R}'_\mt{proper}$ of the transformed system and $\bar{\mathcal{R}}_\mt{proper}$ of the enhanced system coincide,
which is stated in Lemma~\ref{lem:proper_equal}.
Finally, Lemma~\ref{lem:proper_opt} studies the enhanced system and shows that proper signaling without symbol extensions achieves the whole time-sharing rate region,
i.e., $\bar{\mathcal{R}}_\mt{proper}$ is the same as $\bar{\mathcal{R}}$.

\begin{proof}[Proof of Theorem~\ref{th:main}]
By Lemmas~\ref{lem:trans}, \ref{lem:enhanced}, \ref{lem:proper_equal}, and~\ref{lem:proper_opt}, we have
$\mathcal{R}=\mathcal{R}'\subseteq\bar{\mathcal{R}}=\bar{\mathcal{R}}_\mt{proper}=\mathcal{R}_\mt{proper}'=\mathcal{R}_\mt{proper}$,
where $\mathcal{R}'$ is defined in Lemma~\ref{lem:trans}, 
$\bar{\mathcal{R}}$ is defined in Lemma~\ref{lem:enhanced}, and 
the subscript ${}_\mt{proper}$ denotes the respective rate region under a restriction to proper input signals without symbol extensions.
On the other hand, it is clear that $\mathcal{R}_\mt{proper}\subseteq\mathcal{R}$.
This shows that $\mathcal{R}_\mt{proper}=\mathcal{R}$.
\end{proof}

\begin{lemma}
\label{lem:trans}
Consider the reduced QR decomposition
\begin{equation}
\begin{bmatrix}
\mbc h_{kk} & \mbc h_{k\notk}
\end{bmatrix} = \underbrace{\mbc Q_k}_{\in\mathbb{C}^{N_k\times 2}} \begin{bmatrix}
h_k & a_k \E^{\J\varphi_k} \\
0 & b_k \E^{\J\psi_k}
\end{bmatrix}
\end{equation}
with $h_k, a_k, b_k, \varphi_k, \psi_k \in \mathbb{R}$,
and let
\begin{align}
\label{eq:modelTrans:chan}
\mb h_{11}' &= \begin{bmatrix}h_1 \\ 0 \end{bmatrix},
&
\mbc h_{22}' &= \begin{bmatrix}h_2\E^{\J\theta} \\ 0 \end{bmatrix},
&
\mb h_{12}' &= \begin{bmatrix} a_1 \\ b_1 \end{bmatrix},
&
\mb h_{21}' &= \begin{bmatrix} a_2 \\ b_2 \end{bmatrix}
\end{align}
with $\theta=-\varphi_1-\varphi_2$.
Define $\mathcal R'$ to be the time-sharing rate region of the transformed SIMO interference channel
\begin{subequations}
\label{eq:modelTrans}
\begin{align}
\label{eq:modelTrans1}
\mbc y_1 &= \mb h_{11}' \mc x_1 + \mb h_{12}' \mc x_2 + \bnoise_1' \in\mathbb{C}^{2} \\
\label{eq:modelTrans2}
\mbc y_2 &= \mb h_{21}' \mc x_1 + \mbc h_{22}' \mc x_2 + \bnoise_2' \in\mathbb{C}^{2}
\end{align}
\end{subequations}
with proper Gaussian noise $\bnoise_k'\sim\mathcal{CN}(\zero,\id_2)$.
Then, under the assumptions of Theorem~\ref{th:main},
$\mathcal{R}=\mathcal{R}'$ and any rate vector that is achievable in one system with proper signaling without symbol extensions
is also achievable in the other system under the same restrictions.
\end{lemma}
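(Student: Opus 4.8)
The plan is to construct an explicit invertible correspondence between transmit strategies of the original channel \eqref{eq:model} and of the transformed channel \eqref{eq:modelTrans} that leaves the per-user rates $r_1,r_2$ and the per-user powers $\covs{\mc x_1},\covs{\mc x_2}$ unchanged, preserves the pseudovariance constraint $|\pcovs{\mc x_k}|\le\covs{\mc x_k}$, and maps proper strategies to proper strategies (and back). Since the time-sharing problem \eqref{eq:primal}, and equally its symbol-extended counterpart, involves only the rates and the powers, such a correspondence carries feasible points of one channel bijectively to feasible points of the other with the same objective value, yielding $\mathcal{R}=\mathcal{R}'$ and the claim about proper signaling at once. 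The correspondence will be assembled from three elementary invariances of the TIN rate, exploiting that $r_k$ in \eqref{eq:rk} is the mutual information $\MI{\mc x_k}{\mbc y_k}$ of the point-to-point channel obtained by treating the interference-plus-noise $\mbc s_k$ as Gaussian noise: (i) left-multiplying $\mbc y_k$ by a unitary $\mbc U_k$ --- equivalently, left-multiplying $\mbc h_{kk}$ and $\mbc h_{k\notk}$ by $\mbc U_k$ with the white noise staying white --- is a deterministic invertible operation and hence leaves $r_k$ unchanged without affecting $r_\notk$; (ii) deleting components of $\mbc y_k$ that, after such a rotation, carry only noise independent of $\mc x_1$ and $\mc x_2$ changes neither rate; and (iii) multiplying the $k$th channel columns $\mbc h_{1k},\mbc h_{2k}$ by a common phase $\E^{-\J\gamma_k}$ while replacing $\pcovs{\mc x_k}$ by $\E^{2\J\gamma_k}\pcovs{\mc x_k}$ leaves $\covs{\mc x_k}$ and all covariances and pseudocovariances in \eqref{eq:rk} (hence $r_1,r_2$) unchanged, and it preserves $|\pcovs{\mc x_k}|\le\covs{\mc x_k}$ and the property $\pcovs{\mc x_k}=0$.

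First I would apply (i)--(ii) to reduce each receiver to two dimensions (we may assume $N_k\ge2$, appending noise-only dummy antennas otherwise). Completing $\mbc Q_k$ to a unitary and left-multiplying $\mbc y_k$ by its conjugate transpose replaces $[\mbc h_{kk}\ \mbc h_{k\notk}]$ by the triangular factor of the given QR decomposition padded below with $N_k-2$ zero rows; the corresponding entries of the rotated noise are i.i.d.\ white and carry nothing, so they are discarded, leaving $\mathcal{CN}(\zero,\id_2)$ noise. After this step receiver~$1$ observes $\mc x_1$ through $[h_1,0]^\Tr$ and $\mc x_2$ through $[a_1\E^{\J\varphi_1},b_1\E^{\J\psi_1}]^\Tr$, and receiver~$2$ observes $\mc x_2$ through $[h_2,0]^\Tr$ and $\mc x_1$ through $[a_2\E^{\J\varphi_2},b_2\E^{\J\psi_2}]^\Tr$.

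Next I would clear the residual phases using a diagonal receiver-side unitary $\mbc U_k=\diag(\E^{\J\mu_k},\E^{\J\nu_k})$ at receiver~$k$ together with the channel-column phases $\E^{-\J\gamma_k}$ from (iii). Demanding that both cross links and the direct link of user~$1$ become real and nonnegative gives five phase equations in the six angles $\mu_k,\nu_k,\gamma_k$; these are consistent and determine the angles up to one free parameter, and the phase they leave on the direct link of user~$2$ is forced to be $\theta=-\varphi_1-\varphi_2$ independently of that parameter (for instance, with $\gamma_2=0$: $\gamma_1=\mu_1=-\varphi_1$, $\nu_1=-\psi_1$, $\mu_2=\theta$, $\nu_2=-\varphi_1-\psi_2$). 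The resulting channel is exactly \eqref{eq:modelTrans}. The full composition of maps is invertible and, by (i)--(iii), preserves $(r_1,r_2)$, $(\covs{\mc x_1},\covs{\mc x_2})$, the constraints \eqref{eq:primal:pcov}, and propriety, so it induces the desired strategy bijection; since every step acts per channel use, it extends verbatim when symbol extensions are allowed, giving $\mathcal{R}=\mathcal{R}'$ in full generality.

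I expect the phase bookkeeping to be the only genuine obstacle --- verifying that $\mu_k,\nu_k,\gamma_k$ can be chosen so that all four channels reach the canonical form simultaneously, and that the phase this forces on the direct link of user~$2$ is precisely $\theta=-\varphi_1-\varphi_2$ and cannot be eliminated without re-introducing a phase on one of the other three channels. Invariances (i)--(iii) are routine, following from the fact that \eqref{eq:rk} is a mutual information together with the unitary invariance of white Gaussian noise and of determinants.
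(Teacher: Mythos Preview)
Your proposal is correct and follows essentially the same route as the paper's proof. The paper also applies $\mbc Q_k^\He$ to reduce to two receive dimensions, and then clears phases via diagonal output rotations together with an input phase rotation; the paper simply writes down one concrete choice (in your parameterization $\gamma_1=0$, $\gamma_2=\varphi_1$, $\mu_1=0$, $\nu_1=\varphi_1-\psi_1$, $\mu_2=-\varphi_2$, $\nu_2=-\psi_2$) rather than framing the three invariances abstractly and exhibiting the one-parameter freedom as you do, but the underlying argument and the resulting residual phase $\theta=-\varphi_1-\varphi_2$ are identical.
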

Note that there are various ways to transform multiantenna interference channels to simpler formulations, preferably of reduced dimension.
For example, a standard form of the two-user MIMO interference channel discussed in \cite[Sec.~2.2.1]{VeGa18} could also be applied to the SIMO scenario.
The transformation that we instead propose in Lemma~\ref{lem:trans} is designed in a way that the enhanced scenario needed for the following proofs can be easily created.
The proof of Lemma~\ref{lem:trans} in the appendix is based on the fact that neither a receive filtering with $\mbc Q_k^\He$ nor some required phase rotations
of the input and output signals change the achievable rates.

\begin{lemma}
\label{lem:enhanced}
Let $\bar{\mathcal R}$ denote the time-sharing rate region of the modified interference channel
\begin{subequations}
\label{eq:modelUB}
\begin{align}
\label{eq:modelUB1}
\mbc y_1 &= \mb h_{11}' \mc x_1 + \mb h_{12}' \mc x_2 + \bnoise_1' \in\mathbb{C}^{2} \\
\label{eq:modelUB2}
\mbc y_2 &= \mb h_{21}' \mc x_1 + \mbbar h_{22} \mc x_2 + \bnoise_2' \in\mathbb{C}^{2}
\end{align}
\end{subequations}
with 
\begin{equation}
\mbbar h_{kk} = \begin{bmatrix} h_k \\ 0 \end{bmatrix}
\end{equation}
and all other definitions as in Lemma~\ref{lem:trans}.
Then, under the assumptions of Theorem~\ref{th:main}, $\mathcal{R}\subseteq\bar{\mathcal R}$.
\end{lemma}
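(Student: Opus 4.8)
The plan is to prove the stronger-looking but equivalent statement $\mathcal{R}'\subseteq\bar{\mathcal R}$; together with $\mathcal{R}=\mathcal{R}'$ from Lemma~\ref{lem:trans} this is the claim. So I want to show that every rate pair achievable with coded time-sharing in the transformed channel \eqref{eq:modelTrans} is achievable with coded time-sharing in the enhanced channel \eqref{eq:modelUB}. The two channels differ only in receiver~$2$'s desired channel, $\mbc h_{22}'=[h_2\E^{\J\theta};\,0]^\Tr$ versus $\mbbar h_{22}=[h_2;\,0]^\Tr$. Since a receiver may apply any unitary transformation without changing its rate and proper noise is invariant under such a transformation, I would first rewrite the transformed channel, after rotating $\mbc y_2$ by $\E^{-\J\theta}$, as the channel with the \emph{real} desired channel $\mbbar h_{22}$ at receiver~$2$ but with cross channel $\E^{-\J\theta}\mb h_{21}'$ in place of $\mb h_{21}'$. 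After this reduction, \emph{all} channel vectors of the enhanced channel point along the real axis, so that the composite real channel matrices $\bsc H_{11}',\bsc H_{12}',\bsc H_{21}',\bsc{\bar H}_{22}$ are block diagonal and satisfy $\bsc H_{ki}\covR{\crr x_i}\bsc H_{ki}^\Tr=\covR{\crr x_i}\kron(\mb h_{ki}\mb h_{ki}^\Tr)$, while the transformed channel differs from it only by a residual phase $\E^{-\J\theta}$ on receiver~$2$'s cross channel.

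Next I would argue, slot by slot, that this residual phase can only be detrimental. Fix a time-sharing strategy $\{(\tau_\ms,\mathcal X^{(\ms)})\}_{\ms=1}^{\Ms}$ for the transformed channel; for the enhanced channel I keep in each slot $\ms$ the \emph{same} transmit variances $\covs{\mc x_1}^{(\ms)},\covs{\mc x_2}^{(\ms)}$ (so that the per-slot and average power constraints \eqref{eq:primal:pow} are automatically satisfied) and only re-orient the pseudovariances. Because every channel vector of the enhanced channel is real, the improper part of each received covariance is, \emph{simultaneously} at receiver~$1$ and at receiver~$2$, a nonnegative-real combination of the transmit pseudovariance contributions; hence the orientation of $\pcovs{\mc x_1}^{(\ms)},\pcovs{\mc x_2}^{(\ms)}$ can be chosen so that these improper parts are made as small as possible at \emph{both} receivers at once --- which the transformed channel cannot do, because the phase $\theta$ forces the two receivers to favour different orientations. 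In the SISO special case this is exactly the statement that anti-aligning $\pcovs{\mc x_1}$ and $\pcovs{\mc x_2}$ minimizes $|\pcovs{\mbc y_1}|$ and $|\pcovs{\mbc y_2}|$ at the same time, that $r_k$ in \eqref{eq:rk} is decreasing in $|\pcovs{\mbc y_k}|$ when the variances and $|\pcovs{\mbc s_k}|$ are held fixed, and that keeping $|\pcovs{\mc x_k}|$ fixed keeps $|\pcovs{\mbc s_k}|$ fixed; so both $r_1$ and $r_2$ can only go up. Averaging the per-slot inequalities $r_k(\mathcal X^{(\ms)}_{\mathrm{enh}})\ge r_k(\mathcal X^{(\ms)})$ with the weights $\tau_\ms$ then shows that any common scaling $R$ feasible in the transformed channel is feasible in the enhanced channel, i.e., $\mathcal{R}'\subseteq\bar{\mathcal R}$.

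The hard part will be carrying out the re-orientation step in the SIMO setting. There the relevant object at receiver~$k$ is no longer a scalar magnitude but the full composite real covariance $\covR{\crr y_k}=\sum_i\covR{\crr x_i}\kron(\mb h_{ki}\mb h_{ki}^\Tr)+\tfrac12\id$, and modifying $\covR{\crr x_2}$ necessarily perturbs receiver~$1$'s interference covariance $\covR{\crr s_1}$, so one cannot re-orient freely as in the SISO case. I expect the resolution to exploit the special structure produced by Lemma~\ref{lem:trans} --- namely that $\mb h_{11}'$ and $\mbbar h_{22}$ are supported on the first coordinate while only $\mb h_{12}',\mb h_{21}'$ are generic --- in order to pick the reflection axes of the two pseudovariance matrices so that the improper contributions at both receivers are simultaneously oriented against the respective desired-signal contribution, and then to invoke monotonicity of the log-det rate \eqref{eq:rkR} under this decrease, possibly together with a further refinement of the time-sharing to absorb any residual mismatch. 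This matrix version of the SISO anti-alignment argument is the technical core of the lemma; the reduction to it is routine.
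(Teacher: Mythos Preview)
Your reduction to $\mathcal{R}'\subseteq\bar{\mathcal R}$ and the per-slot comparison strategy are exactly right, and your SISO intuition (anti-aligning the pseudovariances so that both received improper parts are minimized simultaneously, which is possible iff $\theta=0$) is precisely the mechanism the paper exploits. However, you explicitly leave the SIMO step as the ``hard part'' and only speculate about how to close it; that is the entire content of the lemma, so as it stands the proposal has a genuine gap.

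The worry you raise --- that re-orienting $\pcovs{\mc x_2}$ perturbs $\covR{\crr s_1}$ --- dissolves once you notice a SIMO-specific fact: because each channel is a single column, the composite real matrices satisfy $\bsc H_{k\notk}^{\prime\Tr}\bsc H_{k\notk}'=\|\mb h_{k\notk}'\|^2\,\id_2$. By Sylvester's identity, $\det\covR{\crr s_k}$ therefore depends only on the \emph{eigenvalues} of $\covR{\crr x_\notk}$, i.e., only on $\covs{\mc x_\notk}$ and $|\pcovs{\mc x_\notk}|$, not on the orientation $\alpha_\notk$. So the denominator of \eqref{eq:rkR} is untouched by your re-orientation, and you may optimize the numerator freely. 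The paper then shows (by expanding the determinants, or in the appendix via a Schur-complement/Hadamard argument) that $\det\covR{\crr y_k}$ depends on $(\alpha_1,\alpha_2,\theta)$ only through $\beta_1=\alpha_2-\alpha_1$ and $\beta_2=\alpha_2-\alpha_1+2\theta$, each maximized at $\pi$; both can equal $\pi$ simultaneously exactly when $\theta=0$, i.e., in the enhanced channel. This per-slot upper bound, independent of $\theta$, is what establishes the inclusion --- no ``refinement of the time-sharing'' is needed here (that device is used in Lemma~\ref{lem:proper_opt}, not in this lemma).
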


The formal proof of Lemma~\ref{lem:enhanced} including the possibility of symbol extensions is presented separately in the appendix.
In the following, we give a short justification by applying the composite real rate expression \eqref{eq:rkR} 
to the transformed system \eqref{eq:modelTrans} obtained in Lemma~\ref{lem:trans}.

Let us parameterize the composite real input covariance matrices as\footnote{The time slot index $\ms$ can be omitted for the sake of brevity whenever we consider only a particular time slot.}
\begin{equation}
\label{eq:UBparam}
\covR{\crr x_k} = 
\frac{\covs{\mc x_k}}{2}\id_2 + \frac{|\pcovs{\mc x_k}|}{2} \begin{bmatrix}
\cos\alpha_k \quad  &\sin\alpha_k \\
\sin\alpha_k  \quad &-\cos\alpha_k
\end{bmatrix}
\end{equation}
where $\alpha_k$ can be chosen arbitrarily without affecting the power constraints or the condition for a valid $\pcovs{\mc x_k}$.
The following facts can be verified by expanding the determinants by the help of a software for symbolic calculations.
First, $\det \covR{\crr s_1}$ and $\det \covR{\crr s_2}$ both do not depend on any of $\alpha_1$, $\alpha_2$ and $\theta$.
Second, there is no individual dependence of $\det \covR{\crr y_1}$ and $\det \covR{\crr y_2}$ on $\alpha_1$, $\alpha_2$, or $\theta$,
but instead the dependence is only via
\begin{align}
\label{eq:UBbeta}
\beta_1 &= \alpha_2-\alpha_1, &\beta_2 &= \alpha_2-\alpha_1+2\theta
\end{align}
respectively.
Third, we have
\begin{equation}
\label{eq:UBderiv}
\frac{\partial\det\covR{\crr y_k}}{\partial\beta_k} =
\frac{ h_k^2 |a_k|^2 |\pcovs{\mc x_1}| |\pcovs{\mc x_\notk}| \sin\beta_k}{8}.
\end{equation}

Let ${\bar r}_k(\mathcal X)$ denote a version of \eqref{eq:rkR} where, after expanding the determinants, all instances of $\beta_k$ are replaced by $\pi$
while the occurrences of $\covs{\mc x_k}$, $\covs{\mc x_\notk}$, $|\pcovs{\mc x_k}|$, and $|\pcovs{\mc x_\notk}|$ remain unchanged.
Due to \eqref{eq:UBderiv}, having $\beta_k=\pi,~\allk$ would be optimal in terms of achievable rates.
The interpretation of this is that the impropriety of the intended signal and of the received interference should point exactly in opposite directions.\footnote{Similar
observations can be found in other system models, e.g., \cite{LaSaSc17,HeUt19b}.}
However, due to \eqref{eq:UBbeta}, it is in general not possible to achieve $\beta_k=\pi$ for both users simultaneously.
Thus, ${\bar r}_k(\mathcal X)$ is in general not achievable, but it is an upper bound to $r_k(\mathcal X)$.

By contrast, the upper bounds are achievable for both users simultaneously
if we instead consider the enhanced system \eqref{eq:modelUB}, where $\theta=0$.
Moreover, the upper bounds in the enhanced system \eqref{eq:modelUB} are the same as in the transformed system \eqref{eq:modelTrans}
since the only dependence of $\det \covR{\crr y_2}$ on $\theta$ via $\beta_2$ was eliminated when forming the upper bound.
This leads to the statement of Lemma~\ref{lem:enhanced} that for any time-sharing solution with $\alpha_1$ and $\alpha_2$ being chosen optimally in each strategy,
the average rates achieved in \eqref{eq:modelUB} are at least as high as in \eqref{eq:modelTrans}.

The idea of channel enhancement was originally proposed in \cite{WeStSh06} to make a nondegraded MIMO broadcast channel degraded by increasing some channel gains
(or, equivalently, reducing the noise).
The idea was later transferred to the MIMO wiretap channel \cite{LiSh09} and the MIMO relay channel \cite{GeHeWeUt15}.
However, the channel enhancement argument that we use in this work is instead based on a different idea that was developed in \cite{HeUt19b}
for the SISO interference-channel.
Lemma~\ref{lem:enhanced} provides a nontrivial extension of this idea to the considered SISO scenario.
Compared to the classical enhancement argument from \cite{WeStSh06},
the first difference is that our aim is not to create a degraded scenario,
but rather a scenario with real-valued channel coefficients,
which turns out to be useful in the following proofs.
The second difference is that we obtain the enhancement by only changing the phase of a single channel coefficient while keeping all magnitudes unchanged.

\begin{lemma}
\label{lem:proper_equal}
Assume a constraint that all transmit signals have to be proper without symbol extensions, and 
let $\mathcal{R}'_\mt{proper}$ and $\bar{\mathcal{R}}_\mt{proper}$ denote the resulting time-sharing rate regions
of \eqref{eq:modelTrans} and \eqref{eq:modelUB}, respectively.
Then, under the assumptions of Theorem~\ref{th:main}, $\mathcal{R}'_\mt{proper}=\bar{\mathcal{R}}_\mt{proper}$.
\end{lemma}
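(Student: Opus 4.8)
The plan is to show that, once all transmit signals are restricted to be proper and no symbol extensions are used, the per-strategy rate functions $r_1$ and $r_2$ are literally the same functions of the transmit variances $(\covs{\mc x_1},\covs{\mc x_2})$ in the two systems \eqref{eq:modelTrans} and \eqref{eq:modelUB}. Given this, the two optimization problems \eqref{eq:primal} (restricted to proper inputs without symbol extensions) share the same objective $R$, the same rate constraints \eqref{eq:primal:rate}, and the same power and validity constraints \eqref{eq:primal:pow}--\eqref{eq:primal:pcov} --- the latter reducing to $0\le\covs{\mc x_k}^{(\ms)}$ and $\sum_\ms\tau_\ms\covs{\mc x_k}^{(\ms)}\le P_k$ in both systems, since the enhancement leaves the power budgets untouched. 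Hence their feasible sets coincide and $\mathcal R'_\mt{proper}=\bar{\mathcal R}_\mt{proper}$.

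To verify the equality of the rate functions, I would first set $\pcovs{\mc x_k}=0$ in \eqref{eq:rk}, which makes its second summand vanish, so that $r_k=\log_2(\det\cov{\mbc y_k}/\det\cov{\mbc s_k})$ with $\cov{\mbc s_k}=\mb h_{k\notk}'\covs{\mc x_\notk}(\mb h_{k\notk}')^\He+\id_2$ and $\cov{\mbc y_k}$ equal to $\cov{\mbc s_k}$ plus the direct-channel term. The two systems differ only in the direct channel of user~$2$, namely $\mbc h_{22}'=\E^{\J\theta}\mbbar h_{22}$ versus $\mbbar h_{22}$, whereas $\mb h_{11}'$, $\mb h_{12}'$, and $\mb h_{21}'$ are identical. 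Thus $r_1$, which involves only $\mb h_{11}'$ and $\mb h_{12}'$, and $\cov{\mbc s_2}$, which involves only $\mb h_{21}'$, are unchanged; and for $r_2$ it remains only to observe that the direct-channel contribution to $\cov{\mbc y_2}$ satisfies $\mbc h_{22}'\covs{\mc x_2}(\mbc h_{22}')^\He=\mbbar h_{22}\covs{\mc x_2}\mbbar h_{22}^\He$, because the scalar $\E^{\J\theta}$ cancels against the $\E^{-\J\theta}$ produced by the conjugate transpose. (Alternatively, by \eqref{eq:UBparam} proper inputs give $\covR{\crr x_k}=\frac{\covs{\mc x_k}}{2}\id_2$ regardless of $\alpha_k$, and the symbolic facts stated after \eqref{eq:UBparam} --- together with \eqref{eq:UBderiv} evaluated at $|\pcovs{\mc x_k}|=0$ --- already show that neither $\det\covR{\crr s_k}$ nor $\det\covR{\crr y_k}$ depends on $\theta$.)

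The only point that needs care --- and the reason the statement is restricted to proper signals --- is exactly this cancellation: for proper inputs the direct channel of user~$2$ enters $r_2$ solely through the Hermitian rank-one matrix $\mbc h_{22}'(\mbc h_{22}')^\He$, in which the unit-modulus factor $\E^{\J\theta}$ disappears. For improper inputs this fails, since the pseudocovariance adds a term $\mbc h_{22}'\pcovs{\mc x_2}(\mbc h_{22}')^\Tr$ in \eqref{eq:rk} with a transpose rather than a conjugate transpose, so the phase reappears as $\E^{2\J\theta}$ (equivalently, through $\beta_2$ in \eqref{eq:UBbeta}) instead of being removed --- which is precisely why the enhancement of Lemma~\ref{lem:enhanced} can enlarge the region only when improper signaling is permitted. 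Accordingly, I do not expect any genuine obstacle beyond carefully tracking which quantities do and do not depend on $\theta$.
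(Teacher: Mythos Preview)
Your proposal is correct and follows essentially the same approach as the paper: both arguments reduce to observing that, for proper inputs, the only place $\theta$ could enter is through $\mbc h_{22}'\covs{\mc x_2}(\mbc h_{22}')^{\He}$, where the unit-modulus phase cancels against its conjugate so that $\cov{\mbc y_2}$ (and hence $r_2$) is independent of $\theta$. Your added framing --- that this makes the two instances of \eqref{eq:primal} identical as optimization problems --- and your remark on why the argument breaks for improper signals via the transpose in the pseudocovariance term are accurate and match the paper's intent.
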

\begin{proof}
For proper signals without time-sharing, the rate is given by the first summand of \eqref{eq:rk}.
It is easy to verify that $\cov{\mbc s_1}$, $\cov{\mbc s_2}$, and $\cov{\mbc y_1}$ do not depend on $\theta$
as the only dependence on $\theta$ can be via $\mbc h_{22}'$.
Since
\begin{equation}
\mbc h_{22}' \covs{\mbc x_k} \mbc h_{22}^{\prime\He} = \begin{bmatrix}
h_2^2 \covs{\mc x_2} &0 \\ 0 &0
\end{bmatrix}
\end{equation}
$\cov{\mbc y_2}$ does not depend on $\theta$ either. Thus, choosing $\theta=0$ to obtain \eqref{eq:modelUB} does not change the achievable rates.
\end{proof}

\begin{lemma}
\label{lem:proper_opt}
For the enhanced SIMO interference channel \eqref{eq:modelUB} under the assumptions of Theorem~\ref{th:main},
proper signaling without symbol extensions achieves the whole time-sharing rate region, i.e., 
$\bar{\mathcal{R}}_\mt{proper}=\bar{\mathcal{R}}$.
\end{lemma}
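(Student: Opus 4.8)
The plan is to reduce the enhanced channel \eqref{eq:modelUB}, which now has only real-valued channel coefficients, to a real-valued setting in which the known SISO-type argument of \cite{HeUt19b} (suitably lifted to the composite real representation) can be invoked. Concretely, since all of $\mb h_{11}'$, $\mbbar h_{22}$, $\mb h_{12}'$, $\mb h_{21}'$ are real, each composite real channel matrix $\bsc H_{kk}$, $\bsc H_{k\notk}$ becomes block-diagonal, $\bsc H = \diag(\mb H, \mb H)$ with $\mb H$ the real $2\times 1$ (here really the real part only) channel. I would first observe that, because $\mbbar h_{kk}=[h_k;0]$ and $\mb h_{k\notk}=[a_k;b_k]$, both the desired-signal term and the interference term in $\covR{\crr y_k}$ and $\covR{\crr s_k}$ depend on the input covariance $\covR{\crr x_k}$ only through a handful of scalar functionals; using the parameterization \eqref{eq:UBparam} together with \eqref{eq:UBbeta}--\eqref{eq:UBderiv} with $\theta=0$ (so $\beta_1=\beta_2=\alpha_2-\alpha_1$), I would show that the optimal choice $\beta_k=\pi$ is simultaneously attainable, and that at this optimum the rate expressions ${\bar r}_k$ coincide with those of a purely real two-user interference channel in which user $k$ transmits a real signal of ``power'' $\tfrac12(\covs{\mc x_k}+|\pcovs{\mc x_k}|)$ on one real dimension and $\tfrac12(\covs{\mc x_k}-|\pcovs{\mc x_k}|)$ on an orthogonal real dimension. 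That is, improper signaling in the enhanced complex channel is equivalent to independent real signaling on two parallel real sub-channels with a split power budget.

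Next I would set up the time-sharing problem \eqref{eq:primal} for the enhanced channel in these real variables and argue that splitting the complex power into two real components, together with a free time-sharing variable, can be reabsorbed: any strategy that puts unequal power on the two real dimensions (i.e. $|\pcovs{\mc x_k}|>0$) can be mimicked, without loss in the achieved average rate tuple and without violating the average power constraint \eqref{eq:primal:pow}, by time-sharing between the proper strategy (equal split) on appropriate power levels. This is exactly the mechanism identified in \cite{HeUt19b}: the rate-vs-power trade-off on a single real scalar Gaussian link with interference is such that the upper concave envelope (over power, with the other user's interference fixed at its time-shared average) is achieved by pure proper-type points, so any ``impropriety gain'' is a chord that lies below the envelope already available through power-and-rate averaging. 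I would formalize this by writing the achievable region of the enhanced channel as the set of $(r_1,r_2)$ for which there exist time-sharing weights and per-slot real power allocations satisfying the averaged rate and power constraints, and then showing this set is unchanged if one imposes equal split per slot.

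The key quantitative step, and the one I expect to be the main obstacle, is proving the ``no improper gain after time-sharing'' claim at the level of the \emph{two}-user coupled problem rather than for a single link: one must show that for the enhanced channel the region is exactly $\conv$ of the proper pure-strategy region under the \emph{average} power constraint, i.e. that the extra degrees of freedom from $|\pcovs{\mc x_k}|$ do not enlarge this convex hull. I would handle this by a Lagrangian / supporting-hyperplane argument: for every weight vector $\mb\mu\succeq\zero$ the weighted-sum objective $\mu_1 {\bar r}_1 + \mu_2 {\bar r}_2$, maximized over per-slot strategies subject only to the per-slot power being some value $q_k^{(\ms)}$ with averaged constraint $\sum_\ms \tau_\ms q_k^{(\ms)}\le P_k$, is — after the reduction above — a maximization of a function of the two real sub-powers; I would show this function, for fixed interference level, is maximized on the diagonal (equal split) once one is allowed to also optimize the total power and time-share, because the ``off-diagonal'' perturbation is exactly compensated by convexifying in total power. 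Making this rigorous requires checking a concavity/monotonicity property of $\tfrac12\log_2\frac{\det\covR{\crr y_k}}{\det\covR{\crr s_k}}$ as a function of the split, which I would verify via the explicit $2\times 2$ determinants (using \eqref{eq:UBderiv} and its analogue in the power-split direction) — routine in principle but the place where a sign condition could go wrong, and where the enhancement to real channels (hence the absence of the obstruction $\beta_2\ne\beta_1$) is precisely what is needed. Finally, combining with the trivial inclusion $\bar{\mathcal R}_\mt{proper}\subseteq\bar{\mathcal R}$ gives equality, which is the statement of Lemma~\ref{lem:proper_opt}.
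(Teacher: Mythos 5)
Your reduction of the enhanced channel to two parallel real sub-channels with per-user powers $\tfrac12(\covs{\mc x_k}+|\pcovs{\mc x_k}|)$ and $\tfrac12(\covs{\mc x_k}-|\pcovs{\mc x_k}|)$ is exactly the paper's first step, and your observation that $\theta=0$ makes $\beta_1=\beta_2$, so that $\beta_k=\pi$ is simultaneously attainable for both users, is also correct. The gap is in how you absorb the power split into time-sharing. You propose a Lagrangian/supporting-hyperplane argument in which, ``with the other user's interference fixed at its time-shared average,'' you would verify a concavity property of the rate in the split and conclude that the equal split lies on the upper concave envelope. This step does not go through as stated: the rates depend on the \emph{per-sub-channel} interference powers, not on their average, so you cannot decouple the two users by averaging the interference (the coupling is precisely why the optimum anti-aligns the two users' splits), and the weighted sum rate is not in general concave in the joint split — the sign condition you yourself flag as a risk would indeed fail.

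The paper avoids any concavity or duality argument with an exact rearrangement: because the two sub-channels have \emph{identical} channel coefficients, a slot with weight $\tau_\ms$ in which \emph{both} users use unequal sub-channel powers $(\pow_{k,1}^{(\ms)},\pow_{k,2}^{(\ms)})$ can be replaced by two slots of weight $\tau_\ms/2$, the first carrying both users' sub-channel-$1$ powers on both sub-channels and the second carrying both users' sub-channel-$2$ powers on both sub-channels. Each user's average power and average rate are preserved exactly (the per-slot rate is the sum of two sub-channel rates with the same channel, and each user sees the same interference per new slot as per old sub-channel), and every slot of the new schedule has equal split, i.e.\ $\pcovs{\mc x_k}^{(\ms)}=0$. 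Separately, your proposal never addresses symbol extensions, which are part of the definition of $\bar{\mathcal R}$ under coded time-sharing; the paper's formal proof runs the same rearrangement over the $2T$ parallel sub-channels produced by the diagonalization in the proof of Lemma~\ref{lem:enhanced}, concluding that scaled-identity composite real covariance matrices — hence no symbol extensions and vanishing pseudovariances — are optimal.
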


The formal proof including the consideration of symbol extensions is given in the appendix.
For the following intuitive justification, we note that all channel vectors in \eqref{eq:modelUB} are real-valued,
so that the composite real representation of each of them is a block-diagonal matrix with two equal blocks due to \eqref{eq:crrmat}.
We thus have
\begin{subequations}
\label{eq:modelBD}
\begin{align}
\crr y_1 &= \begin{bmatrix}\mb h_{11}' & 0 \\ 0 & \mb h_{11}'\end{bmatrix} \crr x_1 + \begin{bmatrix}\mb h_{12}' & 0 \\ 0 & \mb h_{12}'\end{bmatrix} \crr x_2 + \crr\eta_1\\
\crr y_2 &= \begin{bmatrix}\mb h_{21}' & 0 \\ 0 & \mb h_{21}'\end{bmatrix} \crr x_1 + \begin{bmatrix}\mbbar h_{22} & 0 \\ 0 & \mbbar h_{22}\end{bmatrix} \crr x_2 + \crr\eta_2
\end{align}
\end{subequations}
with real-valued Gaussian noise $\crr\eta_k\sim\mathcal{N}(\zero,\frac{1}{2}\id_4),~\forall k$.
This is mathematically equivalent to a symbol extension over two channel uses in a real-valued SIMO interference channel with constant channels.

We can now make use of the upper bounds on the per-user rates defined below \eqref{eq:UBderiv}.
These bounds are achievable in the enhanced scenario if we choose $\alpha_2=\alpha_1+\pi$, and they do not depend on the individual values of $\alpha_1$ and $\alpha_2$.
We may thus assume $\alpha_1=0$ without loss of generality, so that
$\mc{\rtilde c}_{\mc x_k}^{(\ms)}$ are real-valued for both users $k$.
Then, $\covR{\crr x_k}$ in \eqref{eq:UBparam} are diagonal for both $k$ and can be reparameterized as
\begin{align}
\mb C_{\crr x_k}^{(\ms)} &= \frac{1}{2}\begin{bmatrix}
c_{\mc x_k}^{(\ms)} + \mc{\rtilde c}_{\mc x_k}^{(\ms)} & 0 \\
0 & c_{\mc x_k}^{(\ms)} - \mc{\rtilde c}_{\mc x_k}^{(\ms)}
\end{bmatrix}=: \begin{bmatrix}
\pow_{k,1}^{(\ms)} & 0 \\
0 & \pow_{k,2}^{(\ms)}
\end{bmatrix}
\label{eq:per_carrier_powers}
\end{align}
in the $\ms$th time slot.
Since the determinant of a block-diagonal matrix can be rewritten as a product of determinants,
and since the logarithm of a product is a sum of logarithms, 
the diagonal covariance matrices lead to
\begin{align}
\label{eq:rk_carriers}
{\bar r}_{k}^{(\ms)} &= \sum_{t=1}^2 
\frac{1}{2}\log_2\frac{\det \covR{\crr y_{k,t}^{(\ms)}} }{ \det \covR{\crr s_{k,t}^{(\ms)}}  }
\end{align}
with
\begin{subequations}
\begin{align}
\covR{\crr y_{k,t}}^{(\ms)} &= \mbbar h_{kk} \pow_{k,t}^{(\ms)} \mbbar h_{kk}^\Tr + \covR{\crr s_{k,t}^{(\ms)}}, \\
\covR{\crr s_{k,t}^{(\ms)}} &= \mb h_{k\notk}' \pow_{\notk,t}^{(\ms)} \mb h_{k\notk}^{\prime\Tr} + \frac{1}{2}\id_{2}
\end{align}
\end{subequations}
for $t\in\{1,2\}$.

Now assume that $\pow_{k,1}^{(\ms)} \neq \pow_{k,2}^{(\ms)}$ for some $k$ and some $\ms$ in the optimal rate balancing solution.
Then, we can create a new solution with $L'=2L$ time slots with $\tau_{\ms}'=\tau_{\lceil \ms/2 \rceil} / 2$ in \eqref{eq:primal},
and set\footnote{Similar arguments have previously been used in SISO scenarios \cite{NaNg19,HeUt19b}.}
\begin{align}
\pow_{k,1}^{\prime(\ms)} = \pow_{k,2}^{\prime(\ms)} =\begin{cases}
\pow_{k,1}^{(\lceil\ms/2\rceil)},\quad \text{$\ms$ odd}, \\
\pow_{k,2}^{(\lceil\ms/2\rceil)},\quad \text{$\ms$ even}.
\end{cases}
\end{align}
This does not change the value on the left hand side of \eqref{eq:primal:pow},
and due to \eqref{eq:rk_carriers}, the value on the left hand side of \eqref{eq:primal:rate} remains unchanged as well.
In essence, since the channel model in \eqref{eq:modelBD} consists of block-diagonal matrices with equal blocks, varying the power over the blocks in a single time slot is completely interchangeable with varying the power over two consecutive time slots. Thus, any power imbalance between the blocks can be equalized by replacing it with a power imbalance over time slots.
Consequently, there always exists an optimal solution with
\begin{equation}
\pow_{k,1}^{\prime(\ms)} = \pow_{k,2}^{\prime(\ms)} ~\Leftrightarrow~
\frac{c_{\mc x_k}^{\prime(\ms)} + \mc{\rtilde c}_{\mc x_k}^{\prime(\ms)}}{2} = \frac{c_{\mc x_k}^{\prime(\ms)} - \mc{\rtilde c}_{\mc x_k}^{\prime(\ms)}}{2} ~\Leftrightarrow~
\mc{\rtilde c}_{\mc x_k}^{\prime(\ms)}=0
\end{equation}
for all users $k$ and all time slots $\ms$,
i.e., a solution with proper signaling.

To complete the proof of Lemma~\ref{lem:proper_opt} and thus of Theorem~\ref{th:main}, the argumentation is extended 
in the formal proof in the appendix in order to include the possibility of symbol extensions in the complex setting.

\section{Algorithmic Solutions}
\label{sec:algo}
In this section, we discuss numerical methods to optimize the various types of transmit strategies discussed in this paper.
Under a restriction to proper signals, we comment on a globally optimal method for rate balancing with pure strategies, and
we propose a globally optimal method for rate balancing with coded time-sharing.
Afterwards, we turn our attention to improper signals and propose a heuristic approach to weighted sum rate maximization,
which we then use to draw conclusions about pure strategies and about the convex hull formulation in Section~\ref{sec:num}.
Note that there is no need to derive an optimization method for coded time-sharing with improper signals since
we have shown in Theorem~\ref{th:main} that proper signals are optimal in case of coded time-sharing.

\subsection{Pure Strategies with Proper Signals}
\label{sec:algo:pure}
In \cite{LiZhCh12}, a weighted sum rate maximization in the $K$-user SIMO interference channel with proper signals was considered,
and an exponential-complexity method based on monotonic optimization was proposed for this nonconvex problem.
An arising subproblem in this method is the rate balancing problem \eqref{eq:noTS},
for which an efficient solution was proposed in \cite{LiZhCh12}.
For the reader's convenience, we briefly summarize the relevant steps below.

Via a bisection over $R$, problem~\eqref{eq:noTS} can be turned into a series of feasibility problems with fixed rate targets (due to fixed $R$) instead of relative rate targets, i.e.,
\begin{subequations}
\label{eq:noTSbisect}
\begin{align}
\find_{\substack{\covs{\mbc x_1},\covs{\mbc x_2}\\\mbc w_1,\mbc w_2}} ~\st~ & \log_2\left(
1 + \gamma_k\right)
 \geq \rho_k R,~\allk  \\
& 0 \leq \covs{\mbc x_k} \leq P_k, ~\allk 
\end{align}
\end{subequations}
where the signal-to-interference-and-noise ratio $\gamma_k$
can been expressed by means of a receive filter $\mbc w_k$ as
\begin{equation}
\gamma_k = \frac{|\mbc w_k^\He\mbc h_{kk}|^2 \covs{\mbc x_k}}{|\mbc w_k^\He\mbc h_{k\notk}|^2 \covs{\mbc x_\notk} +  \mbc w_k^\He \mbc w_k}.
\end{equation}
Problem \eqref{eq:noTSbisect} has a solution if and only if the value of
\begin{equation}
\label{eq:noTSbisectSINR}
\Gamma(R) = \left(\max_{\substack{\covs{\mbc x_1},\covs{\mbc x_2}\\\mbc w_1,\mbc w_2}} 
\min_{k\in\{1,2\}} 
\frac{\gamma_k}{2^{\rho_k R}-1}
~\st~
  0 \leq \covs{\mbc x_k} \leq P_k,~\allk\right)
\end{equation}
is larger than or equal to $1$.

According to \cite[Th.~4.1]{LiZhCh12}, there exists an $i\in\{1,2\}$ such that the solution of \eqref{eq:noTSbisectSINR} remains unchanged if
the power constraint of user $k=i$ is ignored.
Moreover, problem \eqref{eq:noTSbisectSINR} with only one power constraint can be solved via Perron-Frobenius theory (e.g., \cite{HoJo13})
as derived in detail in \cite{LiZhCh12}.

These ingredients lead to the solution method summarized in Algorithm~\ref{algo:bisect},
which is a specialization of \cite[Algorithms~III and~IV]{LiZhCh12} to the case of $K=2$ users.
The algorithm uses
\begin{align}
\mb A_i &= \begin{bmatrix}
\mb\Psi & \mb\sigma\\
\frac{1}{P_i} \mb e_i^\Tr \mb\Psi & \frac{1}{P_i} \mb e_i^\Tr \mb\sigma
\end{bmatrix}, &&
\\
\mb \Psi &= \begin{bmatrix}
0 & d_1 |\mbc w_1^\He \mbc h_{12}|^2  \\
d_2 |\mbc w_2^\He \mbc h_{21}|^2 & 0 
\end{bmatrix},
&
\mb\sigma&= \begin{bmatrix}
d_1 \|\mbc w_1\|^2 \\  d_2 \|\mbc w_2\|^2
\end{bmatrix}
\end{align}
for $i\in\{1,2\}$, where $d_k = \frac{2^{\rho_k R}-1}{|\mbc w_k^\He \mbc h_{kk}|^2}$ for $k\in\{1,2\}$.

\begin{algorithm}[h]
\caption{Solution to Problem~\eqref{eq:noTS} based on \cite{LiZhCh12}}
\label{algo:bisect}
Perform a bisection to find $R:\Gamma(R)=1$ by repeatedly evaluating the nondecreasing function $\Gamma(R)$ as follows:

For $i\in\{1,2\}$:
\begin{enumerate}
\item Set $\covs{\mbc x_k} \gets 0,~\forall k$. \label{item:bisectFirst}
\item Set $\mbc w_k \gets \cov{\mbc s_k}^{-1}\mbc h_{kk},~\allk $. \label{item:bisectFilt}
\item Set $\lambda$ and $[\covs{\mbc x_1},\covs{\mbc x_2},1]$ to the dominant eigenvalue and eigenvector of $\mb A_i$.
\item Repeat from step~\ref{item:bisectFilt}) until $\lambda$ has converged (change from previous iteration smaller than some $\epsilon$).
\item If $\covs{\mbc x_k} \leq P_k,~\forall k$, return $\Gamma(R)\gets\frac{1}{\lambda}$.
\end{enumerate}
\end{algorithm}

\subsection{Coded Time-Sharing with Proper Signals}
\label{sec:algo:ts}
To solve the time-sharing problem \eqref{eq:primal} under a restriction to proper signals,
we extend the approach for the SISO interference channel in \cite{HeUt19b} to the considered SIMO interference channel.
Since \eqref{eq:primal} fulfills the so-called time-sharing condition from \cite{YuLu06}, it has zero duality gap, and we can
consider the Lagrangian dual problem (e.g., \cite{BoVa09,BaShSh06}).

Let $\mb\pow^{(\ms)}=[\pow_1^{(\ms)},\pow_2^{(\ms)}]^\Tr$, and define the rate with proper signals as $r_k(\mb\pow):=\left.r_k(\mathcal{X})\right|_{\mathcal{X} = (\pow_1,\pow_2,0,0)}$ with $r_k(\mathcal{X})$ from \eqref{eq:rk}.
We can drop the constraint \eqref{eq:primal:pcov} and dualize the constraints \eqref{eq:primal:rate}--\eqref{eq:primal:pow},
so that we obtain 
\begin{align}
\label{eq:dual_start}
\min_{\substack{\mb\mu\geq\zero\\\mb\lambda\geq\zero}} ~\max_{\substack{\Ms\in\mathbb{N},R\in\mathbb{R}\\(\mb\tau\geq\zero): \, \sum_{\ms=1}^{\Ms}\tau_\ms=1}} ~\max_{(\mb\pow^{(\ms)}\geq\zero)_{\forall\ms}} ~~ \Lag
\end{align}
with the dual variables $\mb\mu=[\mu_1,\mu_2]^\Tr$ and $\mb\lambda=[\lambda_1,\lambda_2]^\Tr$,
and the Lagrangian function
\begin{mueq}
\label{eq:lagrangian_reform}
\Lag = \left(1- \sum_{k=1}^2 \mu_k\rho_k \right)R ~+~ \sum_{k=1}^2 \lambda_k P_k 
\ifCLASSOPTIONdraftcls\else\\\fi
+\,\sum_{\ms=1}^\Ms \tau_\ms \sum_{k=1}^2\left( \mu_k \, r_k(\mb\pow^{(\ms)}) - \lambda_k \pow_k^{(\ms)} \right).
\end{mueq}
To avoid an unbounded inner maximization, the outer minimization must be restricted to $\mb\rho^\Tr\mb\mu= 1$,
and we obtain
\begin{equation}
\label{eq:dual_final}
\min_{\substack{\mb\mu\geq\zero,\mb\lambda\geq\zero\\\mb\rho^\Tr\mb\mu= 1}} ~~
\sum_{k=1}^2 \lambda_k P_k
+ f_{\mb\mu,\mb\lambda}(\mb\pow^\star(\mb\mu,\mb\lambda)).
\end{equation}
with
\begin{equation}
\label{eq:inner}
\mb\pow^\star(\mb\mu,\mb\lambda) = 
\argmax_{\mb\pow\geq\zero} ~ f_{\mb\mu,\mb\lambda}(\mb\pow)
\end{equation}
and
\begin{equation}
\label{eq:inner:func}
f_{\mb\mu,\mb\lambda}(\mb\pow)= \sum_{k=1}^2\left( \mu_k \, r_k(\mb\pow) - \lambda_k \pow_k \right).
\end{equation}
Note that the dual problem can finally be expressed without a dependence on $\Ms$ and $\tau_\ms$
since the $\Ms$ instances of the innermost maximization in \eqref{eq:dual_start} are all equivalent to solving the same inner problem \eqref{eq:inner}.
For a similar derivation with more details about intermediate steps, the reader is referred to \cite{HeUt19b}.

The outer problem is a convex program and can be solved by various methods from the literature on convex programming (e.g., \cite{BoVa09,BaShSh06}).
A detailed description how the problem in the SISO case can be solved by means of the cutting plane method \cite{Ke60,BaShSh06}
is given in \cite{HeUt19b}.
As switching to a SIMO scenario only changes the rate equations in the inner problem, but not the structure of the outer problem,
we refer the reader to \cite{HeUt19b} for further details and instead concentrate on the inner problem.

For given $\mb\mu$ and $\mb\lambda$, the inner problem \eqref{eq:inner} can be reformulated as a so-called \emph{mixed monotonic program} (MMP) \cite{MaHeJoUt20},
which can be solved by means of a branch-and-bound algorithm (e.g., \cite[Sec.~6.2]{Tu16}).
For this method, which we summarize below, we introduce the mixed monotonic (MM) function
\begin{mueq}
\brbfun(\brbx,\brby)=
\ifCLASSOPTIONdraftcls\else\\\fi
 \sum_{k=1}^2\left( \mu_k  \log_2\!\left(\!1+ \brbxk \mbc h_{kk}^\He \left(\id_{N_k} \!+ \mbc h_{k\notk} \brbyk[\notk] \mbc h_{k\notk}^\He\right)^{-1} \!\!\mbc h_{kk}\!\right)\!   - \lambda_k \brbyk \right)
\end{mueq}
which is nondecreasing in $\brbx$ and nonincreasing in $\brby$.
Since $f_{\mb\mu,\mb\lambda}(\mb\pow) = \brbfun(\mb\pow,\mb\pow)$, we can rewrite the inner problem as
\begin{equation}
\label{eq:mmp}
\max_{\mb\pow\geq\zero} ~ \brbfun(\mb\pow,\mb\pow)
\end{equation}
and we note that
\begin{subequations}
\label{eq:bounds}
\begin{align}
\label{eq:utop}
\brbU &:= \brbfun( \brbb,\brba) \geq \brbfun( \brbp,\brbp),~\forall \brbp\in\brbBox{\brba}{\brbb}
\\
\label{eq:cbv}
\brbL&:=\brbfun( \brba,\brba) 
\leq \max_{\brbp \in\brbBox{\brba}{\brbb}} \brbfun(\brbp,\brbp)
\end{align}
\end{subequations}
give us upper and lower bounds to the optimal value inside a box $\brbp\in\brbB=\brbBox{\brba}{\brbb}=\{\brbp~|~\brba\leq\brbp\leq\brbb\}$.
The utopian bound \eqref{eq:utop} is due to the MM properties of $\brbfun$ and becomes tight as $\brbb-\brba\to\zero$.

The branch-and-bound algorithm summarized in Algorithm~\ref{algo:bb} is based on the observation that subdividing a box $\hat\brbB=\brbBoxSmall{\hat\brba}{\hat\brbb}$ 
into a pair of smaller boxes $\brbB_1$ and $\brbB_2$ leads to refined bounds \eqref{eq:bounds},
which ultimately become tight if the boxes converge to singletons.
A new pair of boxes can be obtained by cutting the box $\hat\brbB$ along its longest edge into two subboxes, i.e.,
\begin{subequations}
\label{eq:branch}
\begin{align}
\brbB_1 &= \brbBox{\hat\brba}{\hat\brbb - \frac{\hatbrbbk[k^\star] -\hatbrbak[k^\star]}{2} \mb e_{k^\star}}
\label{eq:branch1}\\\label{eq:branch2}
\brbB_2 &= \brbBox{\hat\brba +  \frac{\hatbrbbk[k^\star] -\hatbrbak[k^\star]}{2} \mb e_{k^\star}}{\hat\brbb}
\end{align}
\end{subequations}
and
\begin{equation}
\label{eq:branch_k}
k^\star = \argmax_{k\in\{1,2\}} ~~\hatbrbbk-\hatbrbak.
\end{equation}

\begin{algorithm}[h]
\caption{Branch-and-Bound Method for Problem~\eqref{eq:inner}}
\label{algo:bb}
Given an initial set $\brbBset=\{\brbB_0\}$ such that the optimizer is contained in the box $\brbB_0$:
\begin{enumerate}
\item Find the box with the highest upper bound, i.e., $\hat\brbB = \argmax_{\brbB\in\brbBset} \brbU[\brbB]$ with $\brbUsymb$ defined in \eqref{eq:utop}.\label{item:brb_find}
\item Replace $\brbBset$ by\footnotemark $(\brbBset\setminus\{\hat\brbB\}) \cup \{\brbB_1,\brbB_2\}$ using \eqref{eq:branch}.\label{item:brb_branch}
\item Repeat steps~\ref{item:brb_find}) and~\ref{item:brb_branch}) until
$\max_{\brbB\in\brbBset} \brbU[\brbB]-\max_{\brbB\in\brbBset} \brbL[\brbB] \leq \brbeps$ with $\brbLsymb$ defined in \eqref{eq:cbv}.\label{item:brb_converge}
\item Return the vector $\brbp$ that achieves $\max_{\brbB\in\brbBset} \brbL[\brbB]$.
\end{enumerate}
\end{algorithm}
\footnotetext{We use $\setminus$ to denote a set difference.}

According to the convergence proof in \cite{MaHeJoUt20}, the method converges in finite time to an $\brbeps$-optimal solution,
i.e., a solution that is no more than $\brbeps$ away from the true global optimum,
if the initial box $\brbB_0$ contains the whole feasible set.
Since the feasible set of \eqref{eq:mmp} is unbounded, 
we instead use the following procedure (following the lines of \cite{HeUt19b}) to construct a $\brbB_0$ that contains the global optimum, which is sufficient.

Consider the concave interference-free expression
\begin{equation}
\brbinitfun_k(\brbpk) = 
\mu_k \log_2\left( 1+ \brbpk \|\mbc h_{kk}\|^2 \right) - \lambda_k \brbpk  ~\stackrel{\brbpk\to\infty}{\to}~ -\infty
\end{equation}
where setting the derivative to zero leads to the maximum
\begin{equation}
\brbinitfun_{\mt{max},k}=\max_{\brbpk\geq 0} \brbinitfun_k(\brbpk) = 
\brbinitfun_k\left(
\max\left\{\frac{\mu_k}{\lambda_k\ln 2} - \frac{1}{\|\mbc h_{kk}\|^2}  , 0\right\}
\right)\!.
\end{equation}
Moreover, by any root finding method for concave functions, we can find $\brbpk[0,k]$
such that $\brbinitfun_k(\brbpk)+\brbinitfun_{\mt{max},\notk}\leq0,~\forall \brbpk\geq\brbpk[0,k]$.
Since neglecting the interference cannot reduce the achievable rates, it holds that
$\brbinitfun(\brbp):=\sum_{k=1}^2 \brbinitfun_k(\brbpk)\geq\brbfun(\brbp,\brbp)$,
and we have $\brbinitfun(\brbp)\leq0$ if  $\brbpk\geq\brbpk[0,k]$ for any $k$.
Thus, \eqref{eq:mmp} takes its maximum inside $\brbB_0=\brbBox{\zero}{\brbp_0}$, where $\brbp_0=[\brbpk[0,1],\brbpk[0,2]]^\Tr$.

\begin{remark}
The algorithmic solution based on Lagrange duality, the cutting plane method, and mixed monotonic programming
could also be applied to the $K$-user SIMO interference channel with $K>2$ users.
The only obstacle is that the computational complexity of the branch-and-bound method grows exponentially in the number of variables,
and might thus no longer be feasibly if $K$ grows large.
Due to the nonconvex nature of the inner problem, polynomial-complexity methods for finding its global optimum are not expected to exist.
\end{remark}
\begin{remark}
Instead of the MMP approach, other monotonic programming formulations, e.g., based on the polyblock method as in \cite{LiZhCh12,Br12}, could be used.
However, the case studies in \cite{MaHeJoUt20} suggest that those methods would be computationally less efficient than the proposed MMP solution.
To get an overview of various monotonic programming techniques in similar scenarios, see \cite{ZhQiHu12,BjJo13,MaHeJoUt20} and the references therein.
\end{remark}

\begin{remark}
Note that solving the dual problem \eqref{eq:dual_final} only delivers the optimal rates, but does not directly deliver the time-sharing strategy that achieves these rates.
If we are interested in the optimal strategy, a so-called primal recovery as described in \cite{HeUt19b} can be easily performed based on the cutting plane solution
of the outer problem.
The number of strategies $\Ms$ obtained from the primal recovery can in principle be arbitrarily high,
as defined in the optimization problem \eqref{eq:primal}.
However, it is clear from an extension to the Carath\'eodory Theorem discussed in \cite{HaRa51} that
there always exists an optimal solution of \eqref{eq:primal} that requires no more than $4$ active strategies \cite{HeUt19b}.
Indeed, only a low number of strategies with nonzero time-sharing weights $\tau_\ms$ is observed when applying the algorithm in numerical simulations.
\end{remark}

\subsection{Weighted Sum Rate Maximization with Improper Signals}
\label{sec:algo:grad}
When improper signals are allowed as input signals,
it is no longer sufficient to consider the first summand of the rate equation \eqref{eq:rk}.
Instead, we have to consider the complete expression in \eqref{eq:rk} or the composite real version in \eqref{eq:rkR}.
As these expressions are nonconvex and do not have clear monotonicity properties,
it is not obvious how a globally optimal solution to rate maximization problems with improper signals can be obtained.
In the following, we propose a gradient-projection approach for the weighted sum rate maximization
\begin{equation}
\label{eq:wsrmax}
\max_{\substack{\covR{\crr x_1}\succeq\zero\\\covR{\crr x_2}\succeq\zero}} ~\underbrace{\sum_{k=1}^2 w_k r_k(\covR{\crr x_1},\covR{\crr x_2})}_{W(\covR{\crr x_1},\covR{\crr x_2})}
~~\st~~ \tr{\covR{\crr x_k}}\leq P_k,~\allk
\end{equation}
in the composite real representation.

The approach relies on the covariance-based optimization framework from \cite{HuScJoUt08},
but the following modifications are applied.
\begin{itemize}
\item In \cite[Th.~1]{HuScJoUt08}, a projection onto a feasible set defined by a sum power constraint is derived.
To apply the method to individual power constraints instead, note that the constraints on $\covR{\crr x_1}$ and $\covR{\crr x_2}$ in \eqref{eq:wsrmax} are not coupled.
Thus, we can project each $\covR{\crr x_k}$ individually on its respective feasible set.
Fortunately, this is equivalent to the projection with a sum power constraint in a system with $K=1$ user,
i.e., we can apply the projection from \cite[Th.~1]{HuScJoUt08} individually for each $\covR{\crr x_k}$.
\item While a convex optimization problem was considered in \cite{HuScJoUt08},
we apply the gradient-projection approach to a nonconvex problem.
Thus, we cannot expect to find the globally optimal solution in general.
To increase the probability of finding the global optimum,
we can start the algorithm with multiple random initializations and keep the best solution.
\item Instead of the complex formulation with proper signals in \cite{HuScJoUt08}, we consider a problem in the composite real representation.
Thus, all Hermitian matrices in \cite{HuScJoUt08} are replaced by real-symmetric matrices below.
Moreover, similar as for the composite real optimization methods in \cite{HeUt15a,LaAgVi16}, it is necessary
to choose initial covariance matrices that correspond to improper signaling
because the algorithm would otherwise be stuck in the set of solutions with proper signals and would never find potentially better solutions with improper signals.
\end{itemize}

The procedure summarized in Algorithm~\ref{algo:grad} makes use of the gradient
\begin{equation}
\label{eq:wsr:grad}
\frac{\partial W}{\partial \covR{\crr x_k}} = 
\frac{w_k}{2\ln2} \bsc H_{kk}^\Tr \covR{\crr y_k}^{-1} \bsc H_{kk} + 
\frac{w_\notk}{2\ln2} \bsc H_{\notk k}^\Tr \left( \covR{\crr y_\notk}^{-1} -\covR{\crr s_\notk}^{-1} \right) \bsc H_{kk}
\end{equation}
and of the projection
\begin{equation}
\proj{k}(\covR{\crr x_k}) := \mb\Omega_k \diag( \max\{\xi_{k,i} - \zeta_k,0 \}) \mb \Omega_k
\end{equation}
where $\mb\Omega_k \diag( \xi_{k,i}) \mb \Omega_k = \covR{\crr x_k}$ is an eigenvalue decomposition
and $\zeta_k$ is determined such that $\sum_{i=1}^{2N_k} \max\{\xi_{k,i} - \zeta_k,0 \} = P_k$.
Finding $\zeta_k$ has an interpretation similar to the waterfilling method \cite{CoTh06} and can, e.g., be implemented as described in \cite[Cor.~1]{HuScJoUt08}.

\begin{algorithm}[h]
\caption{Gradient-Projection Algorithm for Problem~\eqref{eq:wsrmax}}
\label{algo:grad}
For each given pair of initial matrices $(\covR{\crr x_1,0},\covR{\crr x_2,0})$:
\begin{enumerate}
\item Set $s\gets 1$.
\item Set $\mb G_k \gets \frac{\partial W}{\partial \covR{\crr x_k}},~\allk$ using \eqref{eq:wsr:grad}.
\label{item:grad:deriv}
\item Set $\covR{\crr x_k,\graditer+1}\gets \proj{k}\left(\covR{\crr x_k,\graditer}+\frac{1}{\gradss}\mb G_k \right),~\allk$.
\label{item:grad:step}
\item If $W(\covR{\crr x_1,\graditer+1},\covR{\crr x_2,\graditer+1})-W(\covR{\crr x_1,\graditer},\covR{\crr x_2,\graditer})<0$, set $\gradss\gets \gradss+1$ and repeat from step~\ref{item:grad:step}).
\label{item:grad:stepsize}
\item If $W(\covR{\crr x_1,\graditer+1},\covR{\crr x_2,\graditer+1})-W(\covR{\crr x_1,\graditer},\covR{\crr x_2,\graditer})>\gradeps$, 
set $\graditer\gets\graditer+1$ and repeat from step \ref{item:grad:deriv}).
\label{item:grad:converge}
\item Return $(\covR{\crr x_1,\graditer+1},\covR{\crr x_2,\graditer+1})$.
\end{enumerate}
\end{algorithm}

\begin{remark}
The gradient-projection approach could be further fine-tuned with a preconditioning step \cite{HuScJoUt08},
other step size rules (e.g., \cite{BaShSh06}),
or a different convergence criterion, e.g., based on the Frobenius norm of the change in the variables instead of based on the objective function.
However, for our purposes in Section~\ref{sec:num}, such a fine-tuning is not required.
Analyzing these possible modifications is thus beyond the scope of this paper.
\end{remark}

\begin{remark}
Even though we consider the SIMO interference channel, the composite real representation is a real-valued MIMO system.
Thus, Algorithm~\ref{algo:grad} is designed for weighed sum rate maximization in the real-valued MIMO interference channel.
This implies that it could also be applied to the composite real representation of a complex MIMO interference channel.
Moreover, the algorithm can easily be extended to a $K$-user MIMO interference channel with $K>2$ users
and is thus an alternative to the heuristic based on weighted minimum mean square error minimization from \cite{LaAgVi16}.
Since studying MIMO scenarios goes beyond the scope of this paper,
a comparison of the average performance of the two heuristics in various scenarios is left open for future research.
\end{remark}

\section{Numerical Results}
\label{sec:num}
To compare the various transmit strategies numerically, we consider the channel realization
\begin{subequations}
\label{eq:num:chan2}
\begin{align}
\mbc h_{11} &=
\begin{bmatrix}-0.0878 + 0.3457\J\\ \phantom{+} 1.0534 + 0.7316\J\end{bmatrix}\!, &
\mbc h_{12} &=
\begin{bmatrix} \phantom{+} 0.9963 + 0.5140\J\\ \phantom{+} 1.0021 - 0.2146\J\end{bmatrix}\!, \\
\mbc h_{21} &= 
\begin{bmatrix} \phantom{+} 0.9496 + 0.4156\J\\-1.7076 - 1.1134\J\end{bmatrix}\!, &
\mbc h_{22} &=
\begin{bmatrix} \phantom{+} 0.5072 + 0.6282\J\\ \phantom{+} 1.1528 - 0.8111\J\end{bmatrix}
\end{align}
\end{subequations}
with $P_1=P_2=10$.
According to the numerical results in Fig.~\ref{fig:num:chan2},
the convex hull formulation for proper signals cannot do any better than time division multiple access, i.e., switching between the two single-user points.
This result is obtained by first calculating the rate region for pure strategies with proper signals using the globally optimal method described in Section~\ref{sec:algo:pure}
and taking the convex hull afterwards.
By contrast, when incorporating the possibility of coded time-sharing directly in the optimization as discussed in Section~\ref{sec:algo:ts},
a significantly larger rate region can be achieved while sticking to proper signaling.
\begin{figure}
\ifCLASSOPTIONdraftcls\centering\fi
\hspace*{-4mm}
\begin{tikzpicture}

\begin{axis}[%
view={0}{90},
width=7.5cm,
height=6.25cm,
scale only axis,
xmin=0, xmax=6,
ymin=0, ymax=5,
xmajorgrids,ymajorgrids,
xlabel={$R_1$},
ylabel={\parbox{2cm}{\centering$R_2$\vspace*{-1.3cm}}},
%
legend style={nodes=right}
]

\addplot [color=black,solid]
  table[row sep=crcr]{%
0.00476470818811159	4.75994347992348\\
0.0626189522907549	4.58442759989885\\
0.118107239059179	4.43323145300936\\
0.171827821680848	4.29994372499053\\
0.224194847356812	4.1804190496336\\
0.275511561012442	4.07181090072234\\
0.326009402747565	3.9720710553614\\
0.375871049213782	3.87967437069883\\
0.425244422640948	3.79345024641608\\
0.47425181904396	3.71247798933353\\
0.522995531208423	3.63601465732775\\
0.571562496617703	3.56345285239805\\
0.620026780961733	3.49428497060915\\
0.668451848015569	3.42808161615216\\
0.716891572203376	3.36447234583463\\
0.765391096955861	3.303133006464\\
0.813987670585964	3.24377738717058\\
0.862711148560079	3.18615027844956\\
0.911583287710676	3.13001851991812\\
0.96061887182606	3.07517094784019\\
1.00982422941102	3.02141021933555\\
1.05919926157804	2.96855722317543\\
1.10873492685485	2.91644179071584\\
1.1584154040197	2.86490804486352\\
1.20821727119573	2.81381092852375\\
1.25811071390952	2.76301821588519\\
1.30805922842859	2.71240866285338\\
1.35802206465288	2.66187600284621\\
1.40795414364586	2.61132688303372\\
1.45746478850764	2.56007993896156\\
1.50746808804216	2.50976865557452\\
1.55733305636104	2.45925874545994\\
1.60702947493644	2.40853642945246\\
1.65644920283082	2.35748234822223\\
1.69776010065352	2.29556014079896\\
1.73589211266599	2.23072796769046\\
1.77332269601508	2.16651888588912\\
1.8101109684159	2.10300791698842\\
1.8463303878805	2.04027183900193\\
1.88206190919859	1.97837993316557\\
1.91739291082309	1.91739291082309\\
1.9524198107208	1.85736566314778\\
1.98729727701424	1.7983914114594\\
2.02243283914166	1.74075800449542\\
2.05751280057933	1.68409985732089\\
2.09261201358325	1.62841401634918\\
2.12780336623437	1.57368983413855\\
2.16316033135372	1.51991178604929\\
2.19875348049002	1.46705759068374\\
2.23465316670592	1.4151008926725\\
2.27092925683588	1.36401153041643\\
2.30765223137488	1.31375658242696\\
2.34489165620244	1.26429974937312\\
2.38271875719884	1.21560307192313\\
2.42120714422181	1.16762727988226\\
2.46043140331451	1.12033105375591\\
2.50046928202261	1.07367205877586\\
2.54140564395526	1.02760835591169\\
2.58332593960795	0.98209527301775\\
2.62632556343772	0.937088925131489\\
2.67050650172186	0.892544201042462\\
2.71598012029465	0.84841519489896\\
2.76286770003155	0.804654670723891\\
2.81130688355746	0.761215127507713\\
2.86144839004097	0.718046718841904\\
2.91346452566275	0.675098400480744\\
2.96754868614655	0.632316281581419\\
3.02392547475311	0.589644237854794\\
3.08285342093512	0.54702226602487\\
3.14463240323104	0.50438549962184\\
3.20961999762745	0.461663957336748\\
3.2782395151864	0.418779870966465\\
3.35100291545384	0.375646234294648\\
3.42853923964693	0.332164124651674\\
3.51162893845914	0.288218422314695\\
3.60126155612121	0.243672709055579\\
3.69871975711524	0.198361432554173\\
3.80571342663646	0.152078140985889\\
3.92458564386251	0.104556231670071\\
4.05869778142127	0.0554379793767534\\
4.21311984207485	0.00421733717925411\\
};
\addlegendentry{proper pure}

\addplot [color=black,dashed]
  table[row sep=crcr]{%
0.00476470818811159	4.75994347992348\\
4.21311984207485	0.00421733717925411\\
};
\addlegendentry{proper cvx.\ hull}

\addplot [color=gray,dashed,mark=triangle*,mark options={solid}]
  table[row sep=crcr]{%
0	4.77534426964198\\
0	4.77534426964198\\
0	4.77534426964198\\
0	4.77534426964198\\
0	4.77534426964181\\
0	4.77534426964196\\
1.86352931450763	4.05026764963119\\
1.86382714228471	4.05012332598061\\
2.82817592440046	3.41026404032461\\
2.82817593188175	3.41026403834125\\
2.82817593451567	3.41026403763872\\
2.82817593527918	3.41026403702216\\
2.82817593578122	3.41026403523053\\
2.82817593686417	3.41026402817407\\
3.5030860688231	2.17629484905353\\
3.50338369576025	2.1755160744454\\
3.5035744111469	2.17486519029129\\
4.22659234751577	0\\
4.22659234751577	0\\
4.22659234751577	0\\
4.22659234751575	0\\
};
\addlegendentry{improper heuristic}

\addplot [color=black,dashdotted]
  table[row sep=crcr]{%
0.00477920412558983	4.77442492146424\\
0.252931484547146	4.71625288769542\\
0.520879515085279	4.64657599171312\\
0.806563917991503	4.54555232614588\\
1.11010674190697	4.42382517188651\\
1.42969457064143	4.27766898481337\\
1.8908232108468	4.03466475796173\\
2.10505581506639	3.90423854709858\\
2.45345608456479	3.67711883938521\\
2.82422424297675	3.41026635962093\\
3.04664756017678	3.04664756017678\\
3.25786269068267	2.66660119490138\\
3.49913094770218	2.18127643493123\\
3.60741441541974	1.94501657645303\\
3.75020444443393	1.61029361793589\\
3.87071129372272	1.29368002545369\\
3.9701279707849	0.99625678126026\\
4.06124626864138	0.720628532537685\\
4.12980122057516	0.462949247146327\\
4.18660972614511	0.224526851818339\\
4.22604536777152	0.00423027564341494\\
};
\addlegendentry{proper time-sharing\!}

\end{axis}
\end{tikzpicture}%
\vspace*{-5mm}\caption{Achievable rate regions with pure strategies, with the convex hull formulation, and with time-sharing in the scenario \eqref{eq:num:chan2} with $P_1=P_2=10$.}
\label{fig:num:chan2}
\end{figure}
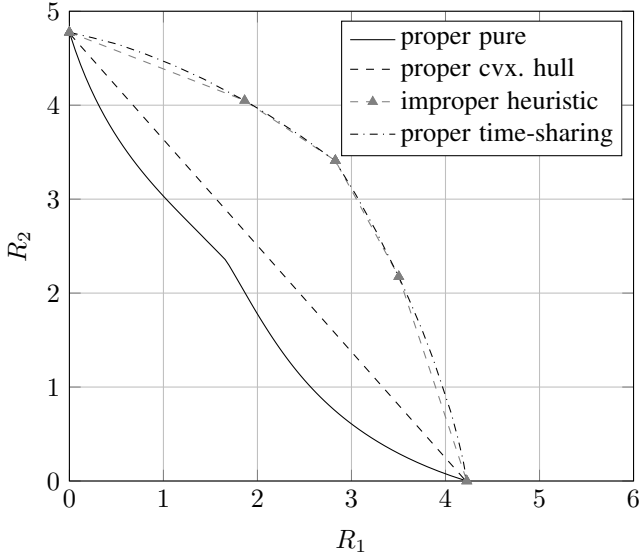

For improper signaling, we present results based on the gradient-projection heuristic for weighted sum rate maximization from Section~\ref{sec:algo:grad}.
When solved optimally, a weighted sum rate maximization can only lead to points that are achievable with pure strategies, but also lie on the Pareto boundary of the convex hull of the rate region \cite[Cor~A5.9]{Br12}.
The results can thus be used to draw conclusions about both pure strategies and the convex hull formulation.

For a better illustration, we have added markers at the points that can be achieved with a pure strategy.
As some of these points lie outside the rate region of pure strategies with proper signals,
we have demonstrated a gain by improper signals for the case of pure strategies.
A gain for the case of the convex hull formulation is demonstrated by the convex hull of these points (gray dashed) compared to the convex hull for proper signaling.
However, the gains by means of improper signaling are limited to these two cases without coded time-sharing.
If coded time-sharing is allowed, Theorem~\ref{th:main} states that no gain by using improper signaling instead of proper signaling is possible.

As we have solved the weighted sum rate maximization with a suboptimal heuristic, it is left open whether a better algorithm could bring
the convex hull for improper signals closer to the plotted time-sharing rate region.
However, it is clear from Theorem~\ref{th:main} that the time-sharing rate region is an outer bound to what is achievable with improper signaling.

In a second scenario with 
\begin{subequations}
\label{eq:num:chan1}
\begin{align}
\mbc h_{11} &=
\begin{bmatrix} \phantom{+} 0.9578 + 2.0563\J\\-0.7581 + 0.5835\J\end{bmatrix}\!, &
\mbc h_{12} &=
\begin{bmatrix} \phantom{+} 0.6795 + 0.9751\J\\ \phantom{+} 0.0877 - 0.7482\J\end{bmatrix}\!, \\
\mbc h_{21} &= 
\begin{bmatrix} \phantom{+} 1.0159 - 0.3314\J\\-1.3866 - 0.1927\J\end{bmatrix}\!, &
\mbc h_{22} &=
\begin{bmatrix} -0.1398 + 0.7767\J\\-0.8541 - 0.1965\J\end{bmatrix}
\end{align}
\end{subequations}
we observe a case where proper pure strategies and the convex hull formulation with proper signals achieve more than time division multiple access.
All other observations remain the same as in the first scenario.
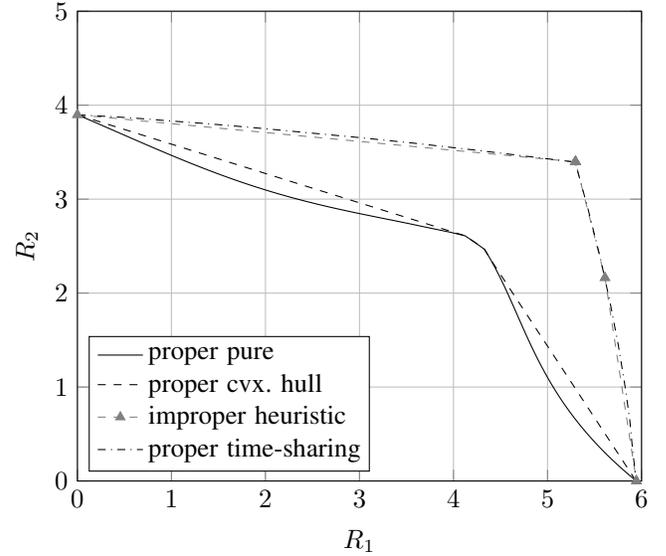
\begin{figure}
\ifCLASSOPTIONdraftcls\centering\fi
\hspace*{-4mm}
\begin{tikzpicture}

\begin{axis}[%
view={0}{90},
width=7.5cm,
height=6.25cm,
scale only axis,
xmin=0, xmax=6,
ymin=0, ymax=5,
xmajorgrids,ymajorgrids,
xlabel={$R_1$},
ylabel={\parbox{2cm}{\centering$R_2$\vspace*{-1.3cm}}},
legend style={at={(0.02,0.02)},anchor=south west,nodes=right}
]

\addplot [color=black,solid]
  table[row sep=crcr]{%
0.00390030853960888	3.89640823106927\\
0.0529343155782923	3.87540079227271\\
0.10267550239878	3.85399125670641\\
0.153135872873534	3.83218287464849\\
0.204328251783	3.80997924886532\\
0.256266432370176	3.78738536045311\\
0.308965233719689	3.76440633806264\\
0.362440884869041	3.74105059397659\\
0.416710844457608	3.71732531087581\\
0.471794332392469	3.69324060375822\\
0.527712475056174	3.66880820133487\\
0.584488038300445	3.64403819284765\\
0.64214666526183	3.61894600402702\\
0.700716048413335	3.59354501126698\\
0.760227166206691	3.56785234536001\\
0.82071383923523	3.54188464174938\\
0.882213582764903	3.51566070818676\\
0.944767722665821	3.48920023540444\\
1.00842195283412	3.4625244130897\\
1.07322698044253	3.43565645803043\\
1.1392383772124	3.40861941605068\\
1.20651748932382	3.381437598893\\
1.27513252298304	3.35413784544333\\
1.34515811639696	3.32674642956801\\
1.41667711567888	3.29929197779275\\
1.48978049931112	3.27180319804764\\
1.56456827224366	3.24430915896476\\
1.64115124062646	3.21684103533434\\
1.71965017274324	3.1894282535863\\
1.80019886541841	3.16210246576045\\
1.88294289391012	3.1348929326325\\
1.96804156230003	3.1078280935128\\
2.05566964777616	3.08093616875598\\
2.14601593095412	3.05424076245866\\
2.2392837814018	3.02776027693479\\
2.33569328368665	3.00150930685697\\
2.43547728916145	2.97549202690486\\
2.53887830564299	2.94969826170931\\
2.64614708980604	2.92410254666599\\
2.75753191096008	2.89865374302706\\
2.87326975409803	2.87326975409803\\
2.99356564494852	2.84782299829949\\
3.11857303601306	2.82213186162111\\
3.24835379695308	2.79593851723603\\
3.38284129306252	2.76889773778404\\
3.5217813858801	2.74055493042148\\
3.66467614246947	2.71033671735391\\
3.81315400014119	2.67925480272126\\
3.96793139935455	2.6474927409498\\
4.12208165147667	2.61032070281947\\
4.22779170903471	2.53938189486238\\
4.33036453490974	2.46529560853997\\
4.38941013433808	2.36664671395818\\
4.44406673043875	2.26725086753573\\
4.49495590752495	2.16769273625323\\
4.54273928510284	2.06848761699334\\
4.58807728452371	1.97006634687096\\
4.63159537750372	1.87276915924272\\
4.67387370417828	1.77685254546395\\
4.71543296838106	1.68249514583631\\
4.75673730733108	1.5898101340713\\
4.79819293850154	1.49885478419479\\
4.84015741739312	1.40964233390537\\
4.88294216585083	1.32215001682329\\
4.92682141891666	1.23632771658079\\
4.97203582055555	1.15210375826576\\
5.01880059116768	1.06939082166386\\
5.06731180326332	0.988090226149424\\
5.11774746820376	0.908094364133176\\
5.1702779872599	0.829290330758029\\
5.22506432067416	0.751560581440979\\
5.28226647369111	0.674785006408255\\
5.34204749745717	0.598841623380431\\
5.40457784123517	0.523606918942711\\
5.4700378871454	0.4489556605962\\
5.53862835378974	0.37476105369964\\
5.61057146730103	0.300893570420001\\
5.686123415477	0.227220229560568\\
5.76558539430813	0.153602937202706\\
5.84931882988916	0.0798961721525116\\
5.93777446089725	0.00594371817907633\\
};
\addlegendentry{proper pure}

\addplot [color=black,dashed]
  table[row sep=crcr]{%
0.00390030853960888	3.89640823106927\\
4.12208165147667	2.61032070281947\\
4.22779170903471	2.53938189486238\\
4.33036453490974	2.46529560853997\\
4.38941013433808	2.36664671395818\\
5.93777446089725	0.00594371817907633\\
};
\addlegendentry{proper cvx.\ hull}

\addplot [color=gray,dashed,mark=triangle*,mark options={solid}]
  table[row sep=crcr]{%
0	3.89807518349661\\
0	3.89807518349664\\
5.29705147467635	3.39704979783487\\
5.29705155210392	3.39704979587089\\
5.29705157863657	3.39704979584532\\
5.29705157811175	3.39704979555203\\
5.29705158095144	3.39704979566367\\
5.2970515817128	3.39704979571116\\
5.29705158231161	3.39704979574743\\
5.29705158230679	3.39704979574774\\
5.29705158231195	3.39704979574138\\
5.29705158232735	3.39704979462853\\
5.2970515822153	3.39704979411936\\
5.29705158226833	3.39704979411402\\
5.297051582351	3.39704979199761\\
5.29705158231157	3.39704979574822\\
5.61093834316287	2.16346693648015\\
5.61093944715364	2.16346178069701\\
5.94508556879281	0\\
5.94508556879281	0\\
5.94508556879281	0\\
};
\addlegendentry{improper heuristic}

\addplot [color=black,dashdotted]
  table[row sep=crcr]{%
0.0039013976031009	3.8974962054978\\
0.208422969584053	3.88633085328536\\
0.434277211130913	3.87402845485036\\
0.684345092723946	3.85676368447543\\
0.962143460381083	3.83418485657347\\
1.2737937535326	3.81121125058954\\
1.62436526716331	3.78297583524451\\
2.02052687605607	3.74746306415538\\
2.47193298701806	3.70481110847934\\
2.992393518259	3.65589245876611\\
3.59405774404157	3.59405774404157\\
4.29403497508661	3.51472111708762\\
5.29327997428053	3.39419741662794\\
5.4234429329035	2.92416816899507\\
5.55894771596278	2.38694667506463\\
5.66647576693425	1.8938654831448\\
5.75334349672959	1.44373368206649\\
5.82067515999008	1.03282202591605\\
5.87394985716313	0.658467688614373\\
5.90651069007239	0.316764718285413\\
5.94384661957451	0.00594979641599051\\
};
\addlegendentry{proper time-sharing\!}

\end{axis}
\end{tikzpicture}%
\vspace*{-5mm}\caption{Achievable rate regions with pure strategies, with the convex hull formulation, and with time-sharing in the scenario \eqref{eq:num:chan1} with $P_1=P_2=10$.}
\label{fig:num:chan1}
\end{figure}

As a further example, we reconsider the channel realization from \eqref{eq:num:chan2}, but we set $\mbc h_{12} = \zero$,
so that receiver~$1$ does not experience any interference from transmitter~2.
The results in Fig.~\ref{fig:num:zifc} show that the previous observations remain valid in such a one-sided SIMO interference channel (SIMO Z-interference channel).
\begin{figure}
\ifCLASSOPTIONdraftcls\centering\fi
\hspace*{-4mm}
\begin{tikzpicture}

\begin{axis}[%
view={0}{90},
width=7.5cm,
height=6.25cm,
scale only axis,
xmin=0, xmax=6,
ymin=0, ymax=5,
xmajorgrids,ymajorgrids,
xlabel={$R_1$},
ylabel={\parbox{2cm}{\centering$R_2$\vspace*{-1.3cm}}},
legend style={at={(0.02,0.02)},anchor=south west,nodes=right}
]

\addplot [color=black,solid]
  table[row sep=crcr]{%
0	4.77534426964198\\
0.230110740872596	4.3721042218333\\
0.450115898950419	4.05104297896738\\
0.666904028285173	3.77912314723907\\
0.88504723494064	3.54018927532295\\
1.10837967614518	3.32513887341707\\
1.34074562240985	3.12840645217319\\
1.58655824258013	2.94646536511531\\
1.85140387407809	2.77710579680752\\
2.14287385019784	2.61906802714631\\
2.47174126705457	2.47174126705458\\
2.85312588682085	2.3343757425128\\
3.30470569844543	2.20313708984543\\
3.83195268064167	2.0633591850472\\
4.22527857343002	1.81083355981539\\
4.22582957232771	1.40860982469878\\
4.22611851368101	1.05652967201668\\
4.22630142179427	0.745817805985604\\
4.22642848697098	0.469603094748196\\
4.22652157933875	0.222448511001337\\
4.22659234751577	0\\
};
\addlegendentry{proper pure}

\addplot [color=black,dashed]
  table[row sep=crcr]{%
0	4.77534426964198\\
4.22527857343002	1.81083355981539\\
4.22582957232771	1.40860982469878\\
4.22611851368101	1.05652967201668\\
4.22630142179427	0.745817805985604\\
4.22642848697098	0.469603094748196\\
4.22652157933875	0.222448511001337\\
4.22659234751577	0\\
};
\addlegendentry{proper cvx.\ hull}

\addplot [color=gray,dashed,mark=triangle*,mark options={solid}]
  table[row sep=crcr]{%
0	4.77534426964198\\
0	4.77534426964198\\
0	4.77534426964198\\
2.59376858591119	4.01420079498772\\
4.22659234751564	3.27626740281447\\
4.22659234751574	3.27626740283716\\
4.22659234748508	3.27626740284489\\
4.22659234748209	3.27626740284511\\
4.22659234751123	3.27626740284296\\
4.22659234751572	3.27626740282604\\
4.22659234749457	3.27626740284419\\
4.22659234750941	3.27626740284309\\
4.22659234751022	3.27626740284303\\
4.22659234751577	3.276267402811\\
4.22659234751561	3.27626737861387\\
4.22659234751573	3.27626738539807\\
4.22659234751577	3.27626740100322\\
4.22659234751576	3.27626740284262\\
4.22659234745594	3.27626740284633\\
4.22659234751283	0.902147710254411\\
};
\addlegendentry{improper heuristic}

\addplot [color=black,dashdotted]
  table[row sep=crcr]{%
0.00477941078675963	4.77463137597287\\
0.253818175991781	4.73278646038899\\
0.523922016269331	4.67371703402165\\
0.816648533972944	4.60238619710167\\
1.13270703520708	4.51388835465871\\
1.47448653987362	4.41168727199711\\
1.84243296315525	4.29083255999805\\
2.23653055877107	4.14808422504587\\
2.65994821488195	3.98659904869114\\
3.10986964740723	3.79941639437733\\
3.5857517124036	3.5857517124036\\
4.22628875180431	3.26167293120443\\
4.22635498804453	3.24253128694063\\
4.22635498804453	0\\
};
\addlegendentry{proper time-sharing\!}

\end{axis}
\end{tikzpicture}%
\vspace*{-5mm}\caption{Achievable rate regions with pure strategies, with the convex hull formulation, and with time-sharing in the scenario \eqref{eq:num:chan2} with $P_1=P_2=10$
and with $\mbc h_{12}$ replaced by $\mbc h_{12} = \zero$.}
\label{fig:num:zifc}
\end{figure}
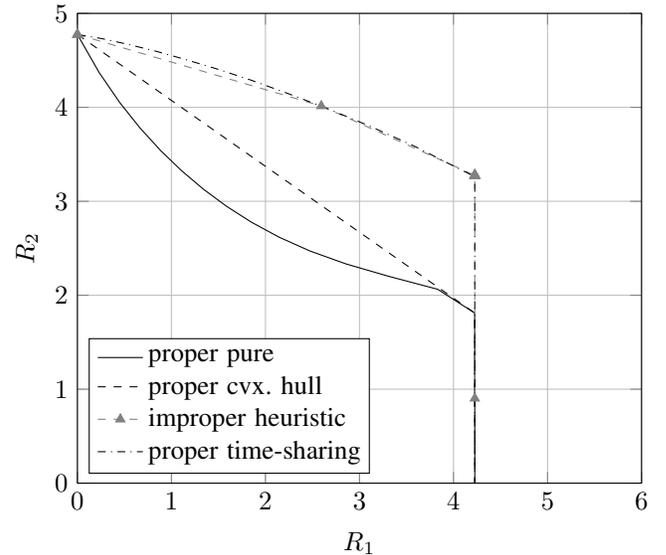

\section{Discussion and Outlook}
\label{sec:conclusion}
We have considered the two-user Gaussian SIMO interference channel with Gaussian inputs and interference treated as noise.
For this scenario, we have proven that proper signals achieve the whole rate region if coded time-sharing is allowed.
On the other hand, for the case where time-sharing is not allowed and only pure strategies or the convex hull formulation are considered,
we have demonstrated numerically that improper signaling can lead to larger rate regions than the globally optimal proper signaling strategy.
These results were established for two-user SISO interference channels in 
\cite{HeUt19b} and \cite{ZeYeGuGuZh13}, respectively,
and we have extended them to the two-user SIMO interference channel.

It is important to note that Theorem~\ref{th:main}, which shows the optimality of proper signals in case of coded time-sharing,
is specific to scenarios with only two users.
Improper signaling may bring gains in the $K$-user SIMO interference channel with $K>2$ even if coded time-sharing is allowed
since the phases of the complex pseudovariances play a role in systems with three or more users \cite{CaJaWa10}.
By contrast, the proof in this paper makes use of the fact that the pseudovariances can be chosen to be real-valued in the case of two users
(see the discussion below Lemma~\ref{lem:proper_opt}).

As a real-valued pseudovariance corresponds to a power imbalance between real and imaginary parts,
the following intuitive interpretation is possible in the two-user case.
In the convex hull formulation, 
reducing the transmit power of a user in one time slot will help the other user by reducing the interference,
but it will not allow us to use a higher transmit power in some other time slot.
The latter would be possible in case of coded time-sharing.
By contrast, reducing the power in one of the real-valued components of a complex signal
always allows us to increase the power in the other component without increasing the total transmit power.
Thus, improper signaling brings a gain in flexibility over proper signaling in the convex hull formulation,
but not in case of coded time-sharing.
For a more detailed discussion of this interpretation, the reader is referred to \cite{HeUt17a,HeUt19b},
where the aspect was discussed in single-antenna scenarios.

A topic for future research would be an extension of Theorem~\ref{th:main} to the MISO interference channel.
For this scenario, gains by means of improper signals were shown in \cite{ZeZhGuGu13} for pure strategies and for the convex hull formulation,
but it is an open question whether improper signals can still be beneficial if coded time-sharing is considered.
Extending the argumentation from this paper to the MISO interference channel is nontrivial since covariance matrices and pseudocovariance matrices of the input signals need to be considered
instead of scalar variances and scalar pseudovariances.

The situation is even less clear for the MIMO interference channel.
When studying the case of coded time-sharing, the complications are the same as described above for the MISO scenario.
However, for the MIMO interference channel, even the case of pure strategies and the convex hull formulation is not fully understood.
Since no globally optimal methods for pure strategies with proper signaling are available if all terminals have multiple antennas,
previous comparisons \cite{LaAgVi16} have only compared proper heuristics to improper heuristics,
but gains by improper signaling over globally optimal proper signaling have not been demonstrated.

\appendix

In this appendix, we prove Lemma~\ref{lem:trans},
and we provide formal proofs of Lemmas~\ref{lem:enhanced} and~\ref{lem:proper_opt} taking into account the possibility of symbol extensions.

\begin{proof}[Proof of Lemma~\ref{lem:trans}]
In the null space of $\mbc Q_k^\He$, we only have noise,
and due to $\cov{\bnoise_k}=\id_{N_k}$, these noise components are independent of the noise components in the orthogonal complement of the null space.
Thus, we do not lose any information when removing all components in this null space (see \cite[Sec.~2.2.1]{VeGa18}) by setting
\begin{align}
\mbc y_k'&=\mbc Q_k^\He \mbc y_k
=\mbc Q_k^\He
\begin{bmatrix}
\mbc h_{kk} & \mbc h_{k\notk}
\end{bmatrix}
\begin{bmatrix}
\mc x_k \\ \mc x_\notk
\end{bmatrix}+\mbc Q_k^\He \bnoise_k
\nonumber\\&=
\underbrace{\mbc Q_k^\He\mbc Q_k}_{=\id_2}\begin{bmatrix}
h_k & a_k \E^{\J\varphi_k} \\
0 & b_k \E^{\J\psi_k}
\end{bmatrix}
\begin{bmatrix}
\mc x_k \\ \mc x_\notk
\end{bmatrix}
+\mbc Q_k^\He \bnoise_k.
\end{align}
Furthermore, defining new input and output signals with rotated phases according to
\begin{subequations}
\label{eq:TransProofPhases}
\begin{align}
\mc x_1' &= \mc x_1  &
\mc x_2' &= \mc x_2  \E^{\J \varphi_1} \\
\mc y_{1,1}'' &= \mc y_{1,1}' &
\mc y_{2,1}'' &= \mc y_{2,1}' \E^{-\J \varphi_2}  &
\\
\mc y_{1,2}'' &= \mc y_{1,2}' \E^{\J (\varphi_1-\psi_1)} &
\mc y_{2,2}'' &= \mc y_{2,2}' \E^{-\J \psi_2} 
\end{align}
\end{subequations}
does not change the achievable rates.\footnote{On the receiver side, we do not lose information since the transformation is invertible.
On the transmitter side, for any distribution of $\mc x_2$ that fulfills the constraints, there exists a distribution for $\mc x_2'$ that fulfills the constraints.}
Thus, the SIMO interference channel
\begin{subequations}
\label{eq:modelTransProof}
\begin{align}
\label{eq:modelTransProof1}
\mbc y_1'' &= \begin{bmatrix}
h_1 & a_1 \E^{\J\varphi_1} \\
0 & \E^{\J (\varphi_1-\psi_1)}b_1 \E^{\J\psi_1}
\end{bmatrix}
\begin{bmatrix}
\mc x_1' \\  \E^{-\J\varphi_1} \mc x_2'
\end{bmatrix}
 + 
\bnoise_1'\\
\label{eq:modelTransProof2}
\mbc y_2'' &= 
\begin{bmatrix}
\E^{-\J \varphi_2}h_2 & \E^{-\J \varphi_2}a_2 \E^{\J\varphi_2} \\
0 & \E^{-\J \psi_2} b_2 \E^{\J\psi_2}
\end{bmatrix}
\begin{bmatrix}
 \E^{-\J\varphi_1} \mc x_2' \\ \mc x_1'
\end{bmatrix}
+ \bnoise_2'
\end{align}
\end{subequations}
with
\begin{subequations}
\begin{align}
\bnoise_1'&=\begin{bmatrix}
1 &~\\~&\E^{-\J(\varphi_1-\psi_1)} 
\end{bmatrix}
\mbc Q_1^\He \bnoise_1 \sim\mathcal{CN}(\zero,\id_2)
\\
\bnoise_2'&=\begin{bmatrix}
\E^{-\J\varphi_2}  &~\\~&\E^{-\J\psi_2} 
\end{bmatrix}
\mbc Q_1^\He \bnoise_2\sim\mathcal{CN}(\zero,\id_2)
\end{align}
\end{subequations}
has the same achievable rate region as \eqref{eq:model}.
Comparing \eqref{eq:modelTransProof} to \eqref{eq:modelTrans},
we can identify $\theta=-\varphi_1-\varphi_2$.
Moreover, $\mc x_1'$ and $\mc x_2'$ are proper if and only if $\mc x_1$ and $\mc x_2$ are proper
since $\pcovs{\mc x_2}=0~\Leftrightarrow~\pcovs{\mc x_2'}=\pcovs{\mc x_2}\E^{\J2\varphi_1}=0~$.
Finally, it is clear from \eqref{eq:TransProofPhases} that a strategy without symbol extensions in the transformed system corresponds to a strategy without symbol extensions
in the original system.
\end{proof}

\begin{proof}[Proof of Lemma~\ref{lem:enhanced}]
We can calculate the achievable rates with symbol extensions \eqref{eq:symext} and possibly improper signals by applying the rate equation \eqref{eq:rkRT} to the transformed SIMO interference channel
\eqref{eq:modelTrans}.
We use the eigenvalue decomposition $\covR{\ubar{\crr x}_\notk}=\mb V_k \mb\Phi_k \mb V_k^\Tr$
and we note that 
$\ubar{\bsc H}_{k\notk}^{\prime\Tr}\ubar{\bsc H}_{k\notk}'$ is equal to the composite real representation of the complex matrix
\begin{align}
\label{eq:enhanced:prodHkj}
\ubar{\mb H}_{k\notk}^{\prime\Tr}\ubar{\mb H}_{k\notk}' =
(\id_T \kron \mb h_{k\notk}')^\Tr(\id_T \kron \mb h_{k\notk}')
= \|\mb h_{k\notk}'\|^2\, \id_{T}.
\end{align}
Translating this to the composite real representation \eqref{eq:crrmat}, where the dimension is doubled, we have $\ubar{\bsc H}_{k\notk}^{\prime\Tr}\ubar{\bsc H}_{k\notk}'= \|\mb h_{k\notk}'\|^2\, \id_{2T}$, and
the denominator of \eqref{eq:rkRT} can be rewritten as
\begin{subequations}
\begin{align}
\label{eq:MIMORdenom}
\det \covR{\ubar{\crr s}_k}
&=2^{-4T}
\det\left(
 \id_{4T}+2\ubar{\bsc H}_{k\notk}' \covR{\ubar{\crr x}_\notk} \ubar{\bsc H}_{k\notk}^{\prime\Tr}\right)
\\
&=2^{-4T}
\det\left(
\id_{2T} + 2 \|\mb h_{k\notk}'\|^2\, \mb\Phi_\notk \right).
\end{align}
\end{subequations}

Furthermore, $\ubar{\bsc H}_{kk}^{\prime\Tr}\ubar{\bsc H}_{k\notk}'$ is equal to the composite real representation of\footnote{We treat $\mbc h_{kk}'$ as a complex vector since it is complex for $k=2$, see \eqref{eq:modelTrans:chan}.}
\begin{align}
\ubar{\mbc H}_{kk}^{\prime\He}\ubar{\mb H}_{k\notk}' =
(\id_T \kron \mbc h_{kk}')^\He(\id_T \kron \mb h_{k\notk}')
=  \mbc h_{kk}^{\prime\He}  \mb h_{k\notk}'\id_{T}.
\end{align}
Thus, we can write $\ubar{\bsc H}_{kk}^{\prime\Tr}\ubar{\bsc H}_{k\notk}'=|\mbc h_{kk}^{\prime\He}\mb h_{k\notk}'|\,\mb U_k$
with the real unitary matrix
\begin{equation}
\label{eq:enhanced:Uk}
\mb U_k = \begin{bmatrix}
\cos(\angle(\mbc h_{kk}^{\prime\He}\mb h_{k\notk}')) & -\sin(\angle(\mbc h_{kk}^{\prime\He}\mb h_{k\notk}')) \\
\sin(\angle(\mbc h_{kk}^{\prime\He}\mb h_{k\notk}')) &  \cos(\angle(\mbc h_{kk}^{\prime\He}\mb h_{k\notk}')) 
\end{bmatrix} \kron \id_{T}.
\end{equation}
Moreover, we have
$\ubar{\bsc H}_{kk}^{\prime\Tr}\ubar{\bsc H}_{kk}'= \|\mbc h_{kk}'\|^2\, \id_{2T}$
in analogy to \eqref{eq:enhanced:prodHkj}.
This enables us to rewrite the numerator of \eqref{eq:rkRT} as\ifCLASSOPTIONdraftcls\else\ given in\fi
\ifCLASSOPTIONdraftcls\else\begin{figure*}[t!]\fi
\begin{subequations}
\label{eq:MIMORnum}
\begin{align}
\label{eq:MIMORnum1}
\det \covR{\ubar{\crr y}_k}
&=2^{-4T}
\det\left(
 \id_{4T}+2 \begin{bmatrix} \ubar{\bsc H}_{kk}' & \ubar{\bsc H}_{k\notk}' \end{bmatrix}
 \begin{bmatrix}  \covR{\ubar{\crr x}_k} & ~ \\ ~ &  \covR{\ubar{\crr x}_\notk} \end{bmatrix}
 \begin{bmatrix} \ubar{\bsc H}_{kk}^{\prime\Tr} \\ \ubar{\bsc H}_{k\notk}^{\prime\Tr} \end{bmatrix}
 \right)
\\
\label{eq:MIMORnum2}
&=2^{-4T}
\det\left(
 \id_{4T}+2 \begin{bmatrix} \ubar{\bsc H}_{kk}' \mb V_k & \ubar{\bsc H}_{k\notk}' \mb V_\notk \end{bmatrix}
 \begin{bmatrix} \mb\Phi_k & ~ \\ ~ &  \mb\Phi_\notk \end{bmatrix}
 \begin{bmatrix} \mb V_k^\Tr \ubar{\bsc H}_{kk}^{\prime\Tr} \\ \mb V_\notk^\Tr \ubar{\bsc H}_{k\notk}^{\prime\Tr} \end{bmatrix}
 \right)
\\
\label{eq:MIMORnum3}
&=2^{-4T}
\det\left(
 \id_{4T}+2 \begin{bmatrix}  \mb V_k^\Tr \ubar{\bsc H}_{kk}^{\prime\Tr}\ubar{\bsc H}_{kk}' \mb V_k &  \mb V_k^\Tr \ubar{\bsc H}_{kk}^{\prime\Tr}\ubar{\bsc H}_{k\notk}' \mb V_\notk \\
  \mb V_\notk^\Tr \ubar{\bsc H}_{k\notk}^{\prime\Tr}\ubar{\bsc H}_{kk}' \mb V_k & \mb V_\notk^\Tr \ubar{\bsc H}_{k\notk}^{\prime\Tr}\ubar{\bsc H}_{k\notk}' \mb V_\notk \end{bmatrix}
 \begin{bmatrix} \mb\Phi_k & ~ \\ ~ &  \mb\Phi_\notk \end{bmatrix}
 \right)
\\
\label{eq:MIMORnum4}
&=2^{-4T}
\det\left(
 \id_{4T}+2 \begin{bmatrix}   \|\mbc h_{kk}'\|^2\, \id_{2T} & |\mbc h_{kk}^{\prime\He}\mb h_{k\notk}'|\,  \mb V_k^\Tr \mb U_k \mb V_\notk \\
  |\mbc h_{kk}^{\prime\He}\mb h_{k\notk}'|\, \mb V_\notk^\Tr \mb U_k^\Tr \mb V_k &  \|\mb h_{k\notk}'\|^2\, \id_{2T} \end{bmatrix}
 \begin{bmatrix} \mb\Phi_k & ~ \\ ~ &  \mb\Phi_\notk \end{bmatrix}
 \right)
\\
\label{eq:MIMORnum5}
&=2^{-4T}
\det\left(
  \begin{bmatrix} \id_{2T} +2 \|\mbc h_{kk}'\|^2\, \mb\Phi_k& 2|\mbc h_{kk}^{\prime\He}\mb h_{k\notk}'|\,  \mb V_k^\Tr \mb U_k \mb V_\notk \mb\Phi_\notk \\
  2|\mbc h_{kk}^{\prime\He}\mb h_{k\notk}'|\, \mb V_\notk^\Tr \mb U_k^\Tr \mb V_k\mb\Phi_k & \id_{2T} +2  \|\mb h_{k\notk}'\|^2\, \mb\Phi_\notk \end{bmatrix}
 \right)
\\
\label{eq:MIMORnum6}
&=2^{-4T}
\det\left(\id_{2T} +2 \|\mb h_{k\notk}'\|^2\, \mb\Phi_\notk\right) \nonumber\\&\quad\quad\cdot
\det\left(
 \id_{2T} +2 \|\mbc h_{kk}'\|^2\, \mb\Phi_k
 - ( 2|\mbc h_{kk}^{\prime\He}\mb h_{k\notk}'|\,  \mb W_k \mb\Phi_\notk)
 (\id_{2T} +2 \|\mb h_{k\notk}'\|^2\, \mb\Phi_\notk)^{-1}
 (2|\mbc h_{kk}^{\prime\He}\mb h_{k\notk}'|\, \mb W_k^\Tr \mb\Phi_k )
 \right)
\\
\label{eq:MIMORnum7}
&=2^{-4T}
\det\left(\id_{2T} +2 \|\mb h_{k\notk}'\|^2\, \mb\Phi_\notk\right)
\det\left(
 \id_{2T} +  2 \|\mbc h_{kk}'\|^2\,  \mb W_k \mb D_k \mb W_k^\Tr  \mb\Phi_k
 \right)\ifCLASSOPTIONdraftcls.\else\fi
\end{align}
\end{subequations}
\ifCLASSOPTIONdraftcls\else\end{figure*}
\eqref{eq:MIMORnum} on the top of page~\pageref{eq:MIMORnum}. \fi
The reformulation in \eqref{eq:MIMORnum6} is obtained by taking the Schur complement \cite{HoJo13}
and makes use of the real unitary matrix $\mb W_k=\mb V_k^\Tr \mb U_k \mb V_\notk$.
In \eqref{eq:MIMORnum7}, we have defined the diagonal matrix
\begin{equation}
\mb D_k = \id_{2T}
 - \frac{|\mbc h_{kk}^{\prime\He}\mb h_{k\notk}'|^2}{\|\mbc h_{kk}'\|^2\,}   \mb\Phi_\notk
 \left(\frac{1}{2}\id_{2T} + \|\mb h_{k\notk}'\|^2\, \mb\Phi_\notk\right)^{-1}.
\end{equation}

Combining \eqref{eq:rkRT}, \eqref{eq:MIMORdenom}, and \eqref{eq:MIMORnum}, we have 
\begin{subequations}
\begin{align}
r_k &= \frac{1}{2T}\log_2 \det(
 \id_{2T} +  2 \|\mbc h_{kk}'\|^2\,  \mb W_k \mb D_k \mb W_k^\Tr  \mb\Phi_k)
 \\
 &\leq\frac{1}{2T}\log_2 \det(\id_{2T} + 2 \|\mbc h_{kk}'\|^2\, \mb D_k \mb {\tilde \Phi}_k) =: {\bar r}_k
\label{eq:enhanced:UB}
\end{align}
\end{subequations}
where the diagonal matrix $\mb {\tilde \Phi}_k$ is a reordered version of $\mb \Phi_k$ that is arranged in a way that the $i$th largest entry of $\mb {\tilde \Phi}_k$
is at the same position as the $i$th largest entry of $\mb D_k$.
The bound is due to the Hadamard inequality \cite[Sec.~7.8]{HoJo13} and due to the optimal ordering of $\mb {\tilde \Phi}_k$,
which can be shown in analogy to the optimality of channel pairing in the relay scenario in \cite{HaWi07}.\footnote{The main argument
can be summarized as follows.
Let $x_1\geq x_2\geq0$, $y_1\geq y_2\geq0$, and $a>0$. Then,
$\log(1+a{x_1}{y_1})+\log(1+a{x_2}{y_2})
\geq
\log(1+a{x_1}{y_2})+\log(1+a{x_2}{y_1})$
is equivalent to
$1+a{x_1}{y_1}+a{x_2}{y_2}+a^2{x_1x_2}{y_1y_2}
\geq
1+a{x_1}{y_2}+a{x_2}{y_1}+a^2{x_1x_2}{y_1y_2}$
$\Leftrightarrow$
$(x_1-x_2)({y_1}-{y_2})\geq 0$,
which is fulfilled.
Similar arguments were used in a real-valued SISO scenario in \cite{BeLiNaYa16}
and in a complex SISO scenario in \cite{HeUt19b}.}

Note that the $i$th diagonal element of $\mb D_k$ is nonincreasing in the $i$th diagonal element of $\mb\Phi_\notk$
and independent of the other elements of $\mb\Phi_\notk$.
Thus, the $i$th largest entry of $\mb D_k$ (which should be at the same position as the $i$th largest entry of $\mb\Phi_i$)
is at the position of the $i$th smallest entry of $\mb\Phi_\notk$.
Thus, if we can find $\mb V_1$ and $\mb V_2$ such that $\mb W_1=\mb W_2=\id_{2T}$,
the bound in \eqref{eq:enhanced:UB} is achievable for both users simultaneously by 
anti-aligned entries (one increasing, the other decreasing) in $\mb\Phi_\notk$ and $\mb\Phi_k$.

While finding such $\mb V_1$ and $\mb V_2$ is in general not possible, it is indeed possible for the case $\theta=0$.
In this case, we obtain from \eqref{eq:enhanced:Uk} that $\mb U_1=\mb U_2 =\id_{2T}$,
so that $\mb V_1=\mb V_2=\id_{2T}$ is adequate to achieve the upper bound.
As the upper bound does not depend on $\theta$ (since it does not depend on $\mb U_2$),
it is the same for the two systems \eqref{eq:modelTrans} and \eqref{eq:modelUB},
but it can be achieved with equality in the enhanced system \eqref{eq:modelUB}.
\end{proof}

\begin{proof}[Proof of Lemma~\ref{lem:proper_opt}]
In the proof of Lemma~\ref{lem:enhanced}, we have already shown using \eqref{eq:enhanced:UB} that the optimal rates ${\bar r}_k$ in the enhanced system \eqref{eq:modelUB} can be achieved with diagonal covariance matrices $\covR{\ubar{\crr x}_k}=\mb\Phi_k$
by choosing $\mb V_1=\mb V_2=\id_{2T}$.
Based on the diagonal entries
\begin{equation}
D_{k,t}^{(\ms)} = 1 - \frac{|\mbbar h_{kk}^\Tr\mb h_{k\notk}'|^2}{\|\mbbar h_{kk}\|^2\,
 \left(\frac{1}{2} + \|\mb h_{k\notk}'\|^2\, \Phi_{\notk,t}^{(\ms)}\right)}   \Phi_{\notk,t}^{(\ms)}
\end{equation}
of $\mb D_k^{(\ms)}$ in the $\ms$th strategy,
we have\footnote{We have substituted the diagonal entry $\tilde \Phi_{k,t}^{(\ms)}$ by $\Phi_{k,t}^{(\ms)}$ since the ordering will implicitly be optimized when optimizing the diagonal entries of $\mb \Phi_k^{(\ms)}$.}
\begin{subequations}
\label{eq:rate_2T}
\begin{align}
\label{eq:rate_2T1}
&{\bar r}_k(\mb\Phi_1^{(\ms)},\mb\Phi_2^{(\ms)}) \\
&=\frac{1}{2T} \sum_{t=1}^{2T} \log_2
 \left(1 + 2 \|\mbbar h_{kk}\|^2\, D_{k,t}^{(\ms)}  \Phi_{k,t}^{(\ms)} \right) 
\\
&=\frac{1}{2T} \sum_{t=1}^{2T} \log_2 
 \left(1 + \mbbar h_{kk}^\Tr\left( \mb h_{k\notk}' \Phi_{\notk,t}^{(\ms)} \mb h_{k\notk}^{\prime\Tr} + \frac{1}{2}\id_2\right)^{-1} \!\! \mbbar h_{kk}  \Phi_{k,t}^{(\ms)} \right)
\label{eq:rate_2T2}
\end{align}
\end{subequations}
where the last equality can be verified by applying the matrix inversion lemma (e.g., \cite[Sec.~0.7.4]{HoJo13}) to the inverse in \eqref{eq:rate_2T2}.
The power constraints \eqref{eq:primal:pow} can be expressed in terms of $\mb\Phi_1$ and $\mb\Phi_2$ as
\begin{equation}
\label{eq:pow_2T}
\sum_{\ms=1}^\Ms \tau_\ms \frac{1}{T} \tr{\mb \Phi_k^{(\ms)} }\leq P_k, ~~\allk.
\end{equation}

Using a similar argument as in \cite{NaNg19,HeUt19b}, we can use $L'=2TL$ time slots with $\tau_{\ms}'=\tau_{\lceil \frac{\ms}{2T} \rceil}/(2T)$ in \eqref{eq:primal},
and we can then set
\begin{align}
\Phi_{k,t}^{\prime(2T(\ms-1)+s)} = \Phi_{k,s}^{(\ms)}, \quad \forall t\in\{1,\dots,2T\}
\end{align}
for all $s\in\{1,\dots,2T\}$, $\ms\in\{1,\dots,L\}$, and $k\in\{1,2\}$.
This does not change the value on the left hand side of \eqref{eq:pow_2T},
and the left hand side of \eqref{eq:primal:rate} remains unchanged as well if \eqref{eq:rate_2T2} is used as rate expression.
Thus, there always exists an optimal solution with scaled identity matrices as covariance matrices $\covR{\ubar{\crr x}_k}$
of the composite real representation \eqref{eq:crrmat} of the extended symbol vectors \eqref{eq:symext}.
These scaled identities, however, imply that the symbol extensions have not been necessary and,
via \eqref{eq:crrcov}, show that the obtained strategy has vanishing pseudovariances in each time slot.
\end{proof}

\bibliographystyle{IEEEtran}
\bibliography{ConfIEEE,IEEEabrv,literature}

\begin{thebibliography}{10}
\providecommand{\url}[1]{#1}
\csname url@samestyle\endcsname
\providecommand{\newblock}{\relax}
\providecommand{\bibinfo}[2]{#2}
\providecommand{\BIBentrySTDinterwordspacing}{\spaceskip=0pt\relax}
\providecommand{\BIBentryALTinterwordstretchfactor}{4}
\providecommand{\BIBentryALTinterwordspacing}{\spaceskip=\fontdimen2\font plus
\BIBentryALTinterwordstretchfactor\fontdimen3\font minus
  \fontdimen4\font\relax}
\providecommand{\BIBforeignlanguage}[2]{{%
\expandafter\ifx\csname l@#1\endcsname\relax
\typeout{** WARNING: IEEEtran.bst: No hyphenation pattern has been}%
\typeout{** loaded for the language `#1'. Using the pattern for}%
\typeout{** the default language instead.}%
\else
\language=\csname l@#1\endcsname
\fi
#2}}
\providecommand{\BIBdecl}{\relax}
\BIBdecl

\bibitem{NeMa93}
F.~D. Neeser and J.~L. Massey, ``Proper complex random processes with
  applications to information theory,'' \emph{{IEEE} Trans. Inf. Theory},
  vol.~39, no.~4, pp. 1293--1302, Jul. 1993.

\bibitem{CaJaWa10}
V.~R. Cadambe, S.~A. Jafar, and C.~Wang, ``Interference alignment with
  asymmetric complex signaling---settling the {H}{\o}st-{M}adsen--{N}osratinia
  conjecture,'' \emph{{IEEE} Trans. Inf. Theory}, vol.~56, no.~9, pp.
  4552--4565, Sep. 2010.

\bibitem{HeUt19b}
\BIBentryALTinterwordspacing
C.~Hellings and W.~Utschick, ``Improper signaling versus time-sharing in the
  two-user {G}aussian interference channel with {TIN},'' 2018, submitted to
  {IEEE} Trans. Inf. Theory. [Online]. Available:
  \url{https://arxiv.org/abs/1808.01611}
\BIBentrySTDinterwordspacing

\bibitem{HaKo81}
T.~Han and K.~Kobayashi, ``A new achievable rate region for the interference
  channel,'' \emph{{IEEE} Trans. Inf. Theory}, vol.~27, no.~1, pp. 49--60, Jan.
  1981.

\bibitem{GaKi11}
A.~El~Gamal and Y.-H. Kim, \emph{Network Information Theory}.\hskip 1em plus
  0.5em minus 0.4em\relax Cambridge University Press, 2011.

\bibitem{ZeYeGuGuZh13}
Y.~Zeng, C.~M. Yetis, E.~Gunawan, Y.~L. Guan, and R.~Zhang, ``Transmit
  optimization with improper {G}aussian signaling for interference channels,''
  \emph{{IEEE} Trans. Signal Process.}, vol.~61, no.~11, pp. 2899--2913, Jun.
  2013.

\bibitem{HeUt19a}
C.~Hellings and W.~Utschick, ``Measuring impropriety in complex and real
  representations,'' \emph{Signal Process.}, vol. 164, pp. 267--283, Nov. 2019.

\bibitem{HoJo12}
Z.~K.~M. Ho and E.~Jorswieck, ``Improper {G}aussian signaling on the two-user
  {SISO} interference channel,'' \emph{{IEEE} Trans. Wireless Commun.},
  vol.~11, no.~9, pp. 3194--3203, Sep. 2012.

\bibitem{PaPaKiLe13}
H.~Park, S.~H. Park, J.~S. Kim, and I.~Lee, ``{SINR} balancing techniques in
  coordinated multi-cell downlink systems,'' \emph{{IEEE} Trans. Wireless
  Commun.}, vol.~12, no.~2, pp. 626--635, Feb. 2013.

\bibitem{KiYeCh13}
J.~Kim, J.~Yeo, and J.~H. Cho, ``Potential of improper-complex signaling in
  communications over two-user interference channel,'' in \emph{6th Int.\
  Workshop Signal Design Appl. Commun. (IWSDA)}, Oct. 2013, pp. 4--7.

\bibitem{KuSu15}
E.~Kurniawan and S.~Sun, ``Improper {G}aussian signaling scheme for the
  {Z}-interference channel,'' \emph{{IEEE} Trans. Wireless Commun.}, vol.~14,
  no.~7, pp. 3912--3923, Jul. 2015.

\bibitem{LaSaSc17}
C.~Lameiro, I.~Santamar{\'\i}a, and P.~J. Schreier, ``Rate region boundary of
  the {SISO} {Z}-interference channel with improper signaling,'' \emph{{IEEE}
  Trans. Commun.}, vol.~65, no.~3, pp. 1022--1034, Mar. 2017.

\bibitem{HeUt17a}
C.~Hellings and W.~Utschick, ``Improper signaling versus time-sharing in the
  {SISO} {Z}-interference channel,'' \emph{{IEEE} Commun. Lett.}, vol.~21,
  no.~11, pp. 2432--2435, Nov. 2017.

\bibitem{ZeZhGuGu13}
Y.~Zeng, R.~Zhang, E.~Gunawan, and Y.~L. Guan, ``Optimized transmission with
  improper {G}aussian signaling in the {K}-user {MISO} interference channel,''
  \emph{{IEEE} Trans. Wireless Commun.}, vol.~12, no.~12, pp. 6303--6313, Dec.
  2013.

\bibitem{LaAgVi16}
S.~Lagen, A.~Agustin, and J.~Vidal, ``Coexisting linear and widely linear
  transceivers in the {MIMO} interference channel,'' \emph{{IEEE} Trans. Signal
  Process.}, vol.~64, no.~3, pp. 652--664, Feb. 2016.

\bibitem{JaSh08}
S.~Jafar and S.~Shamai, ``Degrees of freedom region of the {MIMO} {X}
  channel,'' \emph{{IEEE} Trans. Inf. Theory}, vol.~54, no.~1, pp. 151--170,
  Jan. 2008.

\bibitem{CaJa08}
V.~R. Cadambe and S.~A. Jafar, ``Interference alignment and degrees of freedom
  of the {$K$}-user interference channel,'' \emph{{IEEE} Trans. Inf. Theory},
  vol.~54, no.~8, pp. 3425--3441, Aug. 2008.

\bibitem{BeLiNaYa16}
S.~{Beigi}, S.~{Liu}, C.~{Nair}, and M.~{Yazdanpanah}, ``Some results on the
  scalar {G}aussian interference channel,'' in \emph{Proc.\ Int.\ Symp.\ Inf.\
  Theory (ISIT) 2016}, Jul. 2016, pp. 2199--2203.

\bibitem{AdScSc11}
T.~Adal{\i}, P.~J. Schreier, and L.~L. Scharf, ``Complex-valued signal
  processing: The proper way to deal with impropriety,'' \emph{{IEEE} Trans.
  Signal Process.}, vol.~59, no.~11, pp. 5101--5125, Nov. 2011.

\bibitem{HeUt15}
C.~Hellings and W.~Utschick, ``Block-skew-circulant matrices in complex-valued
  signal processing,'' \emph{{IEEE} Trans. Signal Process.}, vol.~63, no.~8,
  pp. 2093--2107, Apr. 2015.

\bibitem{JoBo02}
E.~Jorswieck and H.~Boche, ``Rate balancing for the multi-antenna {G}aussian
  broadcast channel,'' in \emph{Proc.\ Int.\ Symp.\ Spread Spectrum Tech.\ and
  Appl.\ (ISSSTA) 2002}, vol.~2, Sep. 2002, pp. 545--549.

\bibitem{ScSc10}
P.~J. Schreier and L.~L. Scharf, \emph{Statistical Signal Processing of
  Complex-Valued Data: The Theory of Improper and Noncircular Signals}.\hskip
  1em plus 0.5em minus 0.4em\relax Cambridge, UK: Cambridge University Press,
  2010.

\bibitem{MoZhCi06}
M.~Mohseni, R.~Zhang, and J.~Cioffi, ``Optimized transmission for fading
  multiple-access and broadcast channels with multiple antennas,'' \emph{{IEEE}
  J. Sel. Areas Commun.}, vol.~24, no.~8, pp. 1627--1639, Aug. 2006.

\bibitem{NaNg19}
C.~{Nair} and D.~{Ng}, ``Invariance of the {H}an–{K}obayashi region with
  respect to temporally-correlated {G}aussian inputs,'' \emph{{IEEE} Trans.
  Inf. Theory}, vol.~65, no.~3, pp. 1372--1374, Mar. 2019.

\bibitem{ChVe93}
R.~S. Cheng and S.~Verd\'u, ``On limiting characterizations of memoryless
  multiuser capacity regions,'' \emph{{IEEE} Trans. Inf. Theory}, vol.~39,
  no.~2, pp. 609--612, Mar. 1993.

\bibitem{VeGa18}
V.~V. Veeravalli and A.~El~Gamal, \emph{Interference Management in Wireless
  Networks}.\hskip 1em plus 0.5em minus 0.4em\relax Cambridge, UK: Cambridge
  University Press, 2018.

\bibitem{WeStSh06}
H.~Weingarten, Y.~Steinberg, and S.~Shamai~(Shitz), ``The capacity region of
  the {G}aussian multiple-input multiple-output broadcast channel,''
  \emph{{IEEE} Trans. Inf. Theory}, vol.~52, no.~9, pp. 3936--3964, Sep. 2006.

\bibitem{LiSh09}
T.~Liu and S.~Shamai, ``A note on the secrecy capacity of the multiple-antenna
  wiretap channel,'' \emph{{IEEE} Trans. Inf. Theory}, vol.~55, no.~6, pp.
  2547--2553, Jun. 2009.

\bibitem{GeHeWeUt15}
L.~Gerdes, C.~Hellings, L.~Weiland, and W.~Utschick, ``On the maximum
  achievable partial decode-and-forward rate for the {G}aussian {MIMO} relay
  channel,'' \emph{{IEEE} Trans. Inf. Theory}, vol.~61, no.~12, pp. 6751--6758,
  Dec. 2015.

\bibitem{LiZhCh12}
L.~Liu, R.~Zhang, and K.~C. Chua, ``Achieving global optimality for weighted
  sum-rate maximization in the {K}-user {G}aussian interference channel with
  multiple antennas,'' \emph{{IEEE} Trans. Wireless Commun.}, vol.~11, no.~5,
  pp. 1933--1945, May 2012.

\bibitem{HoJo13}
R.~A. Horn and C.~R. Johnson, \emph{Matrix Analysis}, 2nd~ed.\hskip 1em plus
  0.5em minus 0.4em\relax Cambridge, UK: Cambridge University Press, 2013.

\bibitem{YuLu06}
W.~Yu and R.~Lui, ``Dual methods for nonconvex spectrum optimization of
  multicarrier systems,'' \emph{{IEEE} Trans. Commun.}, vol.~54, no.~7, pp.
  1310--1322, Jul. 2006.

\bibitem{BoVa09}
S.~Boyd and L.~Vandenberghe, \emph{{Convex Optimization}}.\hskip 1em plus 0.5em
  minus 0.4em\relax Cambridge, UK: Cambridge University Press, 2009, 7th
  printing with corrections.

\bibitem{BaShSh06}
M.~S. Bazaraa, H.~D. Sherali, and C.~M. Shetty, \emph{Nonlinear Programming:
  Theory and Algorithms}, 3rd~ed.\hskip 1em plus 0.5em minus 0.4em\relax
  Hoboken, NJ, USA: Wiley-Interscience, 2006.

\bibitem{Ke60}
J.~E. Kelley, Jr., ``The cutting-plane method for solving convex programs,''
  \emph{J.\ Soc.\ Indust.\ and Appl.\ Math.}, vol.~8, no.~4, pp. 703--712,
  1960.

\bibitem{MaHeJoUt20}
\BIBentryALTinterwordspacing
B.~Matthiesen, C.~Hellings, E.~A. Jorswieck, and W.~Utschick, ``Mixed monotonic
  programming for fast global optimization,'' 2019, submitted to {IEEE} Trans.
  Signal Process. [Online]. Available: \url{https://arxiv.org/abs/1910.07853}
\BIBentrySTDinterwordspacing

\bibitem{Tu16}
H.~Tuy, \emph{Convex Analysis and Global Optimization}, 2nd~ed., ser. Springer
  Optimization and Its Applications.\hskip 1em plus 0.5em minus 0.4em\relax
  Springer, 2016, vol. 110.

\bibitem{Br12}
J.~Brehmer, \emph{Utility Maximization in Nonconvex Wireless Systems}, ser.
  Foundations in Signal Processing, Communications and Networking, W.~Utschick,
  H.~Boche, and R.~Mathar, Eds.\hskip 1em plus 0.5em minus 0.4em\relax Berlin,
  Germany: Springer, 2012, vol.~5.

\bibitem{ZhQiHu12}
Y.~J.~A. Zhang, L.~Qian, and J.~Huang, ``Monotonic optimization in
  communication and networking systems,'' \emph{Foundations and
  Trends\textsuperscript{\textregistered} Networking}, vol.~7, no.~1, pp.
  1--75, 2012.

\bibitem{BjJo13}
E.~Bj\"{o}rnson and E.~Jorswieck, ``Optimal resource allocation in coordinated
  multi-cell systems,'' \emph{Foundations and
  Trends\textsuperscript{\textregistered} Commun.\ and Inf.\ Theory}, vol.~9,
  no. 2--3, pp. 113--381, 2013.

\bibitem{HaRa51}
O.~Hanner and H.~R{\aa}dstr{\"o}m, ``A generalization of a theorem of
  {F}enchel,'' \emph{Proc.\ American Math.\ Soc.}, vol.~2, no.~4, pp. 589--593,
  1951.

\bibitem{HuScJoUt08}
R.~Hunger, D.~Schmidt, M.~Joham, and W.~Utschick, ``{A general covariance-based
  optimization framework using orthogonal projections},'' in \emph{Proc.\ IEEE
  9th Int.\ Workshop Signal Process.\ Adv.\ Wireless Commun.\ (SPAWC)}, Jul.
  2008, pp. 76--80.

\bibitem{HeUt15a}
C.~Hellings and W.~Utschick, ``Iterative algorithms for transceiver design in
  {MIMO} broadcast channels with improper signaling,'' presented at the 10th
  Int.\ ITG Conf.\ Syst., Commun., and Coding (SCC), Hamburg, Germany, 2--5
  Feb. 2015.

\bibitem{CoTh06}
T.~Cover and J.~Thomas, \emph{\BIBforeignlanguage{english}{{Elements of
  Information Theory}}}, 2nd~ed.\hskip 1em plus 0.5em minus 0.4em\relax
  Hoboken, NJ: Wiley-Interscience, 2006.

\bibitem{HaWi07}
I.~Hammerstrom and A.~Wittneben, ``Power allocation schemes for
  amplify-and-forward {MIMO-OFDM} relay links,'' \emph{{IEEE} Trans. Wireless
  Commun.}, vol.~6, no.~8, pp. 2798--2802, Aug. 2007.

\end{thebibliography}

\end{document}